\documentclass{article}

\setlength{\oddsidemargin}{0.25 in}
\setlength{\evensidemargin}{-0.25 in}
\setlength{\topmargin}{-0.8 in}
\setlength{\textwidth}{6.5 in}
\setlength{\textheight}{8.5 in}
\setlength{\headsep}{0.75 in}
\setlength{\parindent}{0 in}
\setlength{\parskip}{0.1 in}

\hyphenpenalty=10000

\usepackage{mathtools} 
\usepackage{booktabs} 
\usepackage{tikz} 


\usepackage{amsmath,amsfonts,amssymb}
\usepackage{amsthm}

\usepackage[utf8]{inputenc} 
\usepackage[T1]{fontenc}    
\usepackage{hyperref}       
\usepackage{url}            
\usepackage{booktabs}       
\usepackage{amsfonts}       
\usepackage{nicefrac}       
\usepackage{microtype}      
\usepackage{amsmath,amsfonts,amssymb}
\usepackage{amsthm}
\usepackage{graphicx}
\usepackage{framed}
\usepackage{subfig}
\usepackage{commath}
\usepackage{authblk}
\usepackage{enumerate}

\usepackage{float}
\newfloat{Algorithm}{hbtp}{lop}[section]
\usepackage{boxedminipage}

\usepackage{xfrac}

\newcommand{\lv}[1]{}

\newcommand{\etal}{\emph{et~al.}}

\newcommand{\veps}{\varepsilon}

\newcommand{\X}{\mathbb{X}}

\newcommand{\R}{\mathbb{R}}

\newcommand{\N}{\mathbb{N}}
\newcommand{\Pa}{\mathbb{P}}
\newcommand{\Sp}{\mathbb{S}^{d-1}}

\newcommand{\Lp}[1]{L(X, k, #1)}

\newtheorem{definition}{Definition}
\newtheorem{lemma}{Lemma}
\newtheorem{theorem}{Theorem}

\usepackage{ifthen}
\newboolean{shortver}
\setboolean{shortver}{false}


\graphicspath{{./Figures/}}

\title{On the $k$-Means/$k$-Median Cost Function}

\author[1]{Anup Bhattacharya}
\affil[1]{School of Computer Sciences, NISER}
\affil[]{\textit{bhattacharya.anup@gmail.com}}

\author[2]{Yoav Freund}
\affil[2]{CSE, UCSD}
\affil[]{\textit{yfreund@eng.ucsd.edu}}

\author[3]{Ragesh Jaiswal}
\affil[3]{CSE, IIT Delhi}
\affil[]{\textit{rjaiswal@cse.iitd.ac.in}}





\begin{document}
\maketitle

\begin{abstract}
In this work, we study the $k$-means cost function. Given a dataset $X \subseteq \R^d$ and an integer $k$, the goal of the Euclidean $k$-means problem is to find a set of $k$ centers $C \subseteq \R^d$ such that $\Phi(C, X) \equiv \sum_{x \in X} \min_{c \in C} \norm{x - c}^2$ is minimized. Let $\Delta(X,k) \equiv \min_{C \subseteq \R^d} \Phi(C, X)$ denote the cost of the optimal $k$-means solution. For any dataset $X$, $\Delta(X,k)$ decreases as $k$ increases. In this work, we try to understand this behaviour more precisely. For any dataset $X \subseteq \R^d$, integer $k \geq 1$, and a precision parameter $\veps > 0$, let $\Lp{\veps}$ denote the smallest integer such that $\Delta(X,\Lp{\veps}) \leq \veps \cdot \Delta(X,k)$. We show upper and lower bounds on this quantity. Our techniques generalize for the metric $k$-median problem in arbitrary metric spaces and we give bounds in terms of the {\em doubling dimension} of the metric. Finally, we observe that for any dataset $X$, we can compute a set $S$ of size $O \left(\Lp{\veps/c} \right)$ using {\em $D^2$-sampling} such that $\Phi(S,X) \leq \veps \cdot \Delta(X,k)$ for some fixed constant $c$. Next, we mention some applications of our bounds. First, analysing the pseudo-approximation guarantees of $k$-means++ seeding has been a popular research topic. Our results may be seen as non-trivial addition to the current state of knowledge. Secondly, our bounds imply that any constant approximation algorithm when executed with number of clusters $O \left(\Lp{\veps^2/c} \right)$ gives an {\em $(k, \veps)$-coreset} for the $k$-means problem. This implies that a $D^2$-sampled set of size $O \left(\Lp{\veps^2/c} \right)$ is a $(k, \veps)$-coreset. This is an improvement over similar results of Ackermann \etal\cite{streamkm}. Third, our results also imply that the rate of decrease of $\Delta_k(X)$ with $k$ depends on the {\it intrinsic dimension} of any dataset $X$. Hence the rate at which $\Delta_k(X)$ diminishes may be used to infer the intrinsic dimension of a dataset $X$. We propose a sampling based intrinsic dimension estimator and evaluate it on real and synthetic datasets.
\end{abstract}

\section{Introduction}\label{sec:intro}
The Euclidean $k$-means problem is one of the most well-studied problems in the clustering literature. 
The problem is defined in the following manner: 

\begin{definition}[$k$-Means problem] Given a dataset $X \subseteq \R^d$ and a positive integer $k$, find a set of $k$ points $C \subseteq \R^d$ (called {\em centers}) such that the cost function $\Phi(C, X) \equiv \sum_{x \in X} \min_{c \in C} D(x, c)^2$ is minimized, where $D(x, c) \equiv \norm{x - c}$. \end{definition}

In the weighted version of the $k$-means problem, there is a weight function $w: X \rightarrow \R^{+}$ and the cost function for the weighted $k$-means problem is defined as $\Phi(C, X, w) \equiv \sum_{x\in X} \min_{c \in C} \left(w(x) \cdot D(x, c)^2 \right)$. Let $\Delta(X,k)$ denote the optimal cost of the $k$-means objective function. That is $\Delta(X,k) \equiv \min_{C \subseteq \R^d,|C|=k} \Phi(C, X)$.
In this work, we try to understand the behaviour of $\Delta(X,k)$ as $k$ increases. 
More specifically, for a small precision parameter $\veps > 0$, we ask: what is the smallest integer $k'$ such that $\Delta(X,k')$ is at most $\veps \cdot \Delta(X,k)$? 
Note that when $\veps = 1$, $k' = k$ and as $\veps$ becomes smaller, $k'$ should grow. 
We are interested in understanding the relationship of $k'$ with input parameters such as the size of the dataset $n$, dimension $d$, and $k$. Next, we formally define the quantity $\Lp{\veps}$ for which we obtain bounds in this paper.

\begin{definition} For any dataset $X \subseteq \R^d$, precision parameter $0 < \veps \leq 1$ and positive integer $k$, let $\Lp{\veps}$ denote the smallest integer such that $\Delta(X,(\Lp{\veps}) \leq \veps \cdot \Delta(X,k)$. \end{definition}

We give upper and lower bounds on $\Lp{\veps}$ in terms of the geometric quantities known as the covering and packing numbers (\cite{matousek2002}). These are defined below.

\begin{definition}[Covering number] Let $(\X, D)$ be a metric space and let $0 < \veps \leq 1$. A subset $S$ of $\X$ is said to be an $\veps$-covering set for $\X$ iff for every $x \in \X$, there exists an $s \in S$ such that $D(x, s) \leq \veps$. The minimum cardinality of an $\veps$-covering set of $\X$, if finite, is called the covering number of $\X$ (at scale $\veps$) and is denoted by $\N(\X, \veps)$. \end{definition}

\begin{definition}[Packing number] Let $(\X, D)$ be a metric space and let $0 < \veps \leq 1$. A subset $S$ of $\X$ is said to be an $\veps$-packing set iff for every $x, y \in S$ such that $x \neq y$, we have $D(x, y) \geq \veps$. The maximum cardinality of an $\veps$-packing set of $\X$, if finite, is called the packing number of $\X$ (at scale $\veps$) and is denoted by $\Pa(\X, \veps)$. \end{definition}

Our bounds on $\Lp{\veps}$ are in terms of $\N(\Sp, \veps)$ and $\Pa(\Sp, \veps)$, where $\Sp$ denotes a unit sphere in $\R^d$. 
The next lemma gives the bounds on $\N(\Sp, \veps)$ and $\Pa(\Sp, \veps)$. The proof of the lemma may be found in Appendix \ref{sec:appen}.

\begin{lemma}[Bounds on $\N(\Sp, \veps)$ and $\Pa(\Sp, \veps)$]\label{lemma:geometric-bounds}
Let $\Sp$ denote a unit sphere in $\R^d$.
Then,
\[\N(\Sp, \veps) = O\left(\frac{1}{(\veps/8)^{d-1}}\right), \Pa(\Sp, \veps) = \Omega\left(\frac{1}{(2\veps)^{d-1}}\right)\]

\end{lemma}

Here is our main result for the Euclidean $k$-means problem.

\begin{theorem}[Main result for $k$-means]\label{thm:main1}
Let $\Sp$ denote a unit sphere in $\R^d$.
The following holds for any $0 < \veps \leq 1/8$ and any positive integer $k$:
\begin{enumerate}
\item For any dataset $X \subseteq \R^d$ with $n$ points, $\Lp{\veps} = O \left( \frac{\N(\Sp, \sqrt{\frac{\veps}{2}}) \cdot k \cdot \log{n}}{\sqrt{\veps}}\right)$, and

\item There exists a dataset $X \subseteq \R^d$ with $n$ points such that $\Lp{\veps} = \Omega \left( \frac{\Pa(\Sp, \sqrt{8 \veps}) \cdot k \cdot \log{n}}{\sqrt{\veps}} \right)$.
\end{enumerate}
\end{theorem}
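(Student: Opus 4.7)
My plan is to prove both parts through explicit geometric constructions: for the upper bound, I would augment the optimal $k$-means centers with additional centers drawn from sphere coverings at geometrically spaced scales; for the lower bound, I would place hierarchical sphere packings around $k$ far-apart anchors.

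\textbf{Upper bound.} I would fix an optimal $k$-means solution with centers $c_1^\ast, \ldots, c_k^\ast$ and clusters $X_1, \ldots, X_k$, writing $\phi_i = \sum_{x \in X_i} \|x - c_i^\ast\|^2$ so that $\Delta_k(X) = \sum_i \phi_i$. For each cluster, set $R_i = \max_{x \in X_i}\|x - c_i^\ast\|$ and partition $X_i$ into rings: ring $j$ collects points at distance in $(R_i \alpha^{-(j+1)}, R_i \alpha^{-j}]$ for $j = 0, \ldots, J-1$, where $\alpha = 1 + \Theta(\sqrt{\veps})$ and $J = \Theta(\log(n)/\sqrt{\veps})$ is chosen so that $\alpha^{2J} \geq n/\veps$. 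For each ring $j$ I would fix an $\veps'$-covering $S_j \subseteq \Sp$ with $\veps' = \Theta(\sqrt{\veps})$ and add the centers $\{c_i^\ast + R_i \alpha^{-j} v : v \in S_j\}$. For any $x \in X_i$ in ring $j$ at distance $r$ from $c_i^\ast$ with unit direction $u$, choosing $v \in S_j$ closest to $u$ should give squared distance at most $r^2(\alpha \veps' + (\alpha - 1))^2 \leq \veps r^2$ via a triangle-inequality split into angular and radial error. Summing, the ring contribution per cluster is at most $\veps \phi_i$; points at distance $\leq R_i \alpha^{-J}$ are served by $c_i^\ast$ and add at most $n R_i^2 \alpha^{-2J} \leq \veps R_i^2 \leq \veps \phi_i$. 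A constant rescaling of $\veps$ gives total cost $\leq \veps \Delta_k(X)$ using $k \cdot J \cdot |S_j| = O(\N(\Sp, \sqrt{\veps/2}) \cdot k \log(n)/\sqrt{\veps})$ added centers.

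\textbf{Lower bound.} I would place $k$ anchors $a_1, \ldots, a_k$ at pairwise distances far larger than any intra-anchor scale. Fixing a $\sqrt{8\veps}$-packing $\{u_1, \ldots, u_P\} \subseteq \Sp$ with $P = \Pa(\Sp, \sqrt{8\veps})$, and setting $r_j = \alpha^{-j}$ with $\alpha = 1 + \Theta(\sqrt{\veps})$ and $J = \Theta(\log(n)/\sqrt{\veps})$, I would include the points $\{a_i + r_j u_p : p \in [P]\}$ around each anchor for $j = 0, \ldots, J-1$, giving $n = kJP$. The optimal $k$-means solution uses the anchors themselves, yielding $\Delta_k(X) = k P \sum_j r_j^2 = \Theta(kP/\sqrt{\veps})$. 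Given any candidate solution with $k'$ centers, the anchor separation forces each center to be associated with a unique anchor; the key combinatorial claim is that a single center can ``save'' (drive the per-point contribution from $r_j^2$ down to $\leq \veps r_j^2$) for at most one triple $(i,j,p)$, because two distinct same-level directions differ by $\sqrt{8\veps} r_j$ and two different levels in the same direction differ radially by at least $\Theta(\sqrt{\veps}) r_j$, both exceeding the $2\sqrt{\veps} r_j$ slack a single center would need to simultaneously serve two points. A charging argument then shows that driving the total cost below $\veps \Delta_k(X)$ demands $k' = \Omega(kJP)$, and Lemma~\ref{lemma:geometric-bounds} converts this to the claimed form.

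The hardest step will be the lower-bound side: the savings must be charged across the geometric scales $r_j$ carefully, since placing a single well-chosen center per direction at the average radius already reduces the cost by a constant factor thanks to the $r_j^2$-decay. I expect the clean path is to run the argument level by level and direction by direction, bounding the residual cost contributed by unsaved triples and then summing with the geometric weights. On the upper-bound side, the only real optimisation is to balance $\alpha$ and $\veps'$ so that the angular error $\alpha\veps'$ and the radial error $\alpha - 1$ are both of order $\sqrt{\veps}$ and jointly yield a per-point squared distance of at most $\veps r^2$.
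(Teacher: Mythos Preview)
Your upper bound is sound and close in spirit to the paper's: both place centers at geometrically spaced radii around each optimal center, with angular discretisation governed by a $\Theta(\sqrt{\veps})$-cover of $\Sp$. The paper factors the construction differently---it first projects each cluster onto a ``fan'' of lines through the cover points (paying an angular error of $(\veps/2)\Delta_1$), and then invokes a separately proved one-dimensional lemma on each line---whereas you work directly with shells in $\R^d$. Either route lands on the same bound.

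Your lower bound, however, has a real gap, and it is the one you flag yourself without resolving. With a \emph{single} point at each location $a_i + r_j u_p$ and $r_j = \alpha^{-j}$, the cost $\Delta_k(X) = kP\sum_j r_j^2$ is dominated by the outermost $O(\log(1/\veps)/\sqrt{\veps})$ levels: indeed $\sum_{j\ge m} r_j^2 \le \veps \sum_{j\ge 0} r_j^2$ as soon as $\alpha^{2m}\ge 1/\veps$, i.e.\ $m=\Theta(\log(1/\veps)/\sqrt{\veps})$. Hence placing one center exactly on each of the $kP\cdot m$ outermost points already drives the cost below $\veps\Delta_k(X)$, regardless of how large $J$ (equivalently $n$) is. Your ``one center saves one triple'' claim, even if true, cannot rescue this: an unsaved triple contributes only $>\veps r_j^2$, and summing $\veps r_j^2$ over the unsaved inner triples is tiny compared to $\veps\Delta_k(X)$. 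So the construction as stated yields only $\Lp{\veps}=\Omega(kP\log(1/\veps)/\sqrt{\veps})$, losing the $\log n$ factor entirely.

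The paper's fix is to load each location with a carefully chosen \emph{multiplicity}: at radius $r^j$ it places $r^{2(t-j)}$ co-located copies, so that every populated location contributes the \emph{same} amount $r^{2t}$ to $\Phi(\{c_i\},X_i)$. Now any set of centers that misses half the locations automatically retains at least half the cost, and since there are $\Theta(t)=\Theta(\log n/\sqrt{\veps})$ levels per direction, the $\log n$ factor survives. You need this equal-contribution device (or an equivalent weighting) for the lower bound to go through.
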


Note that a worse upper bound of $O\left( (9d/\veps)^{d/2} \cdot k \cdot \log{n}\right)$ for the Euclidean $k$-means problem was implicit in the work of \cite{streamkm} where as our bound is $O\left((128/\veps)^{(d-1)/2} \cdot k \cdot \log n \right)$. On the lower bound side, this question was open. Next, we show similar bounds for the metric $k$-median problem over arbitrary metrics. We first define the metric $k$-median problem over any metric $(\X, D)$.

\begin{definition}[Metric $k$-Median problem] Let $(\X, D)$ be any metric space. Given $X \subseteq \X$ and an integer $k$, find a set $C \subseteq \X$ of $k$ centers such that the cost function $\Phi(C, X) \equiv \sum_{x \in X} \min_{c \in C} D(x, c)$ is minimised. \end{definition}

We will use $\Delta(X,k)$ to denote the optimal cost of the metric $k$-median problem on dataset $X$ and $\Lp{\veps}$ to denote the smallest integer such that $\Delta(X,\Lp{\veps}) \leq \veps \cdot \Delta(X,k)$. 
However, here we obtain the bounds in terms of the {\em doubling dimension} of the metric. 
Let us first define the doubling dimension. 
The diameter $dia(X)$ of any set $X \subseteq \X$ is defined as $dia(X) = \max_{x, x' \in X}{D(x, x')}$. Given any set $X \subseteq \X$ and $r\in \R^{+}$, a set $\{X_1,X_2,\ldots,X_m\}$ is said to be an $r$-cover of $X$ iff $\cup_i X_i = X$ and for all $1 \leq i \leq m, dia(X_i) \leq r$. Given $X \subseteq \X$ and $r \in \R^{+}$, the covering number of the set $X$ with respect to diameter $r$, denoted by $\lambda(X, r)$, is the size of the $r$-cover of smallest cardinality. 
We can now define the doubling dimension of any metric $(\X, D)$.

\begin{definition}[Doubling dimension]
The doubling dimension of any metric $(\X, D)$ is the smallest integer $d$ such that for every $X \subseteq \X$, $\lambda \left(X, dia(X)/2 \right) \leq 2^d$.
\end{definition}

We obtain the above bounds for the metric $k$-median problem in terms of the doubling dimension. Here are the statements of upper and lower bounds that we obtain:

\begin{theorem}[Upper bound for metric $k$-median]\label{thm:metric-median}
Let $(\X, D)$ be any metric space with doubling dimension $d$.
For any $0 < \veps \leq 1$, any integer $k\geq 1$, and any dataset $X \subseteq \X$ with $n$ points, there exists a set $\xi \subseteq \X$ of size $O \left( \frac{k \cdot \log{n}}{(\veps/8)^d}\right)$ such that $\Phi(\xi, X) \leq \veps \cdot \Delta(X,k)$.
\end{theorem}

\begin{theorem}[Lower bound for metric $k$-median]
For any $0 < \veps \leq 1/8$, any integer $k\geq 1$, there exists a metric space $(\X, D)$ with doubling dimension $d$ and a dataset $X \subseteq \X$ with $n$ points, such that for any set $\xi \subseteq \X$ with $\Phi(\xi, X) \leq \veps \cdot \Delta(X,k)$, $\xi$ is of size $\Omega \left( \frac{k \cdot \log{n}}{(16\veps)^{d-1}}\right)$.
\end{theorem}

Next, we discuss a few applications of our bounds.

\subsection{Applications and related work}

The main applications of our bounds are in understanding the pseudo-approximation behaviour of the $k$-means++ seeding algorithm and coreset constructions for the $k$-means/$k$-median clustering problems. Here, we discuss mainly in terms of the (Euclidean) $k$-means problem, but most of the ideas may be extended for the $k$-median problem in any arbitrary metric. 

\subsubsection{Pseudo-approximation of $k$-means++}
$k$-means++ seeding is a sampling procedure that is popularly used as a seeding algorithm for the Lloyd's algorithm for $k$-means. The algorithm is given as follows.

\begin{quote}
    
({\bf $k$-means++ seeding or $D^2$-sampling}): Let $X \subseteq \R^d$. Pick the first center uniformly at random from $X$. After having picked $(i-1)$ centers denoted by $C_{i-1}$, pick a point $x \in X$ to be the $i^{th}$ center with probability proportional to $\min_{c \in C_{i-1}} D(x, c)^2$, where  $D(x, c) \equiv \norm{x - c}$.
\end{quote}

\cite{ArthurV07} showed that this algorithm gives an $O(\log{k})$-approximation guarantee in expectation. A lot of follow-up research has been done to understand the pseudo-approximation behaviour of this algorithm. 
Note that $k$-means++ seeding stops after sampling $k$ centers using $D^2$-sampling\footnote{Given a set $C$ of centers, $D^2$-sampling with respect to $C$ chooses point $x$ with probability proportional to $\min_{c\in C} D(x,c)^2$.}. If one continues to sample centers even after sampling $k$ of them, then do the sampled centers give better than $O(\log{k})$ pseudo-approximation? Pseudo-approximation means that the cost is calculated with respect to the sampled centers, more than $k$ in number, but compared with the optimal solution for $k$ centers. \cite{AggarwalDK09} analysed this behaviour and showed that if one samples $O(k)$ centers, then we get a constant factor pseudo-approximation. \cite{W16} showed that if $\beta k$ centers are sampled for {\em any} constant $\beta > 1$, then we get a constant factor pseudo-approximation in expectation. In a more recent work, \cite{MRS2020} gave improved pseudo-approximation guarantees for $k$-means++ when $k+\Delta$ centers are sampled using $D^2$-sampling, for some $\Delta>0$. Clearly, as the number of centers sampled using $D^2$-sampling increases, the $k$-means cost with respect to the sampled centers will decrease. Let us try to understand this behaviour. One way to formalise this is to find bounds on the number of samples such that the cost becomes at most $\veps$ times the optimal cost with respect to $k$ centers for any $0 < \veps \leq 1$. Combining our bounds with the results of \cite{AggarwalDK09} (i.e, $O(k)$ samples give constant approximation with high probability, we get the following.

\begin{theorem}\label{thm:pseudo}
        There is a universal constant $c$ for which the following holds: For any $0 < \veps \leq 1$, positive integer $k$, and any dataset $X \subseteq \R^d$, let $S$ denote a set of centers sampled with $D^2$-sampling such that $|S| = \Omega(\Lp{\veps/c})$. Then, with high probability, $\Phi(S, X) \leq \veps \cdot \Delta(X,k)$.
\end{theorem}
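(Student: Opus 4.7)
The plan is to reduce the theorem to Wei's pseudo-approximation result~\cite{W16}, using the definition of $\Lp{\veps/c}$ as a bridge. By the defining property of $\Lp{\veps/c}$, the optimal clustering with $k' := \Lp{\veps/c}$ centers satisfies $\Delta_{k'}(X) \leq (\veps/c)\cdot \Delta_k(X)$. So it suffices to show that running $D^2$-sampling for $\Omega(k')$ steps produces a set $S$ whose cost is within a constant factor of $\Delta_{k'}(X)$ with high probability; choosing the universal constant $c$ large enough to swallow this extra constant then gives the stated bound.

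To that end, I would apply Wei's theorem with \emph{target} cluster count $k'$ rather than $k$: sampling $\beta k'$ centers by $D^2$-sampling, for any fixed constant $\beta > 1$, yields an $\alpha = \alpha(\beta)$-factor pseudo-approximation to $\Delta_{k'}(X)$ in expectation. Plugging the bound on $\Delta_{k'}(X)$ into this gives
\[
\E[\Phi(S, X)] \;\leq\; \alpha \cdot \Delta_{k'}(X) \;\leq\; \frac{\alpha\,\veps}{c} \cdot \Delta_k(X).
\]
Setting $c \geq 2\alpha$ and applying Markov's inequality, one obtains $\Phi(S, X) \leq \veps \cdot \Delta_k(X)$ with probability at least $1/2$.

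The main obstacle is upgrading this constant-probability guarantee to one that holds with high probability, since Wei's statement is in expectation. The cleanest route is a standard boosting argument: run the sampling procedure $\Theta(\log n)$ times independently to obtain $S_1, \dots, S_t$, and output the union $S = \bigcup_i S_i$. Because $\Phi(\cdot, X)$ is monotone nonincreasing in the center set, the cost of $S$ is upper bounded by the cost of the best single trial, so the union fails only if \emph{every} trial fails, an event of probability at most $2^{-\Theta(\log n)} = 1/\poly(n)$. The remaining subtlety is the resulting logarithmic blow-up in $|S|$; depending on how tightly one reads the $\Omega(\Lp{\veps/c})$ bound, one either absorbs this $\log n$ factor into the stated asymptotics, or invokes a high-probability refinement of Wei's analysis (in the style of Aggarwal \etal~\cite{AggarwalDK09}) to avoid the blow-up entirely. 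Pinning down this last point is the only nontrivial step; everything else is a two-line chaining through the definition of $\Lp{\veps/c}$.
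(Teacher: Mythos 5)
Your proposal is correct and follows exactly the route the paper intends: the paper's entire justification for this theorem is the one-line remark that it follows by ``combining our bounds with the results of Wei~\cite{W16},'' i.e., applying Wei's $\beta k'$-sample constant-factor pseudo-approximation with target $k' = \Lp{\veps/c}$ and absorbing the approximation constant into $c$ via the defining inequality $\Delta_{k'}(X) \leq (\veps/c)\cdot\Delta_k(X)$. Your discussion of upgrading Wei's in-expectation guarantee to a high-probability one is in fact more careful than anything in the paper, which simply asserts ``(w.h.p.)'' without addressing the Markov-plus-boosting step or the attendant $\log n$ blow-up in $|S|$.
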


\subsubsection{Coresets for $k$-means}
Coresets are extremely useful objects in data processing, where a coreset of a large dataset can be thought of as a concise representation of the dataset with respect to the specific data processing task in question. Next, we give the formal definition of a coreset for the $k$-means problem.

\begin{definition}[$(k,\veps)$-coreset] A $(k,\veps)$-coreset of a set $X \subseteq \R^d$ is a set $S \subseteq \R^d$ along with a weight function $w: S \rightarrow \R^{+}$ such that for any set of $k$ centers $C \subseteq \R^d$, we have:
\[
(1 - \veps) \cdot \Phi(C, X) \leq \Phi(C, S, w) \leq (1 + \veps) \cdot \Phi(C, X)
\]
\end{definition}

A lot of work \cite{BadoiuHI02, Har-PeledM04, PeledKushal05, ls10, fl11, fss13} has been done in constructing coresets of small size.\footnote{The size of a coreset is the size of the set $S$ in the definition.} \cite{Har-PeledM04} and \cite{PeledKushal05} had coreset constructions by {\em quantization} of the space and finding points that may ``represent'' more than one point of the given dataset. In some sense, these coreset constructions are more geometric in nature than the more advanced constructions (see \cite{fss13}) and hence, these coresets are also known as ``movement-based'' coresets in \cite{mcc20}.
We define the notion of a movement-based coreset as follows.

\begin{definition}[$(k, \veps)$-movement-based coreset]
A $(k, \veps)$-movement-based coreset of a dataset $X \subseteq \R^d$ is a set of points $S \subseteq \R^d$ such that:
\[
\Phi(S, X) \leq \veps \cdot \Delta(X,k).
\]
\end{definition}

We will show that sampling $O(\Lp{\veps^2/\beta})$ points using $D^2$-sampling gives a $(k,\veps)$-movement-based coreset for some constant $\beta$.

\subsubsection{Estimation of of intrinsic dimension}

The intrinsic dimension of a dataset may be thought of as the minimum number of parameters required to account for the observed properties of the dataset. For most datasets, the {\em extrinsic dimension} (observed dimension) is much larger than the intrinsic dimension. The performance of many data analysis algorithms deteriorates as the dimension increase (popularly known as the curse of dimensionality). However, in many contexts, the data actually lies in a much lower dimensional space. In that case the intrinsic dimension of the dataset is much lower than the extrinsic dimension. The easiest example to consider is a dataset $X \subseteq \R^d$ such that the $X$ sits in a $D \ll d$ dimensional subspace of $\R^d$. In this case the intrinsic dimension is $D$. 

Since one of the key algorithmic tools to tackle the high dimensional data is dimensionality reduction, estimating the intrinsic dimension of a given dataset becomes an important task. 

There has been a lot of work in developing techniques for estimation of intrinsic dimension of a dataset. \cite{CS16} gave a nice survey on this topic. There are a number of ways to formalize the above intuitive notion of intrinsic dimension. One way to formalize is to say that a high dimensional dataset with extrinsic dimension $d$ has intrinsic dimension $D < d$ if the data sits within a $D$-dimensional sub-manifold. \cite{RL06} gave a dimension estimation technique under the assumption that the data points are uniformly distributed on a $D$-dimensional compact smooth sub-manifold of $\mathbb{R}^{d}$. Their technique involves finding the rate at which the {\em quantization error} diminishes with respect to the rate of the quantizer. Since some of the above terms have not been defined in our current context, let us rephrase them in the context of the $k$-means problem. Essentially, they estimated the intrinsic dimension to be the slope of the log-log plot of $k$ versus $(\Delta(X,k))^{1/2}$. The theoretical justification for such an estimation is that for any regular probability measure over a $D$-dimensional compact manifold, the expectation of the quantization error (i.e., expectation of $(\Delta(X,k))^{1/2}$ for random variable $X$) behaves as $\Theta(k^{-1/D})$. However, such a result is not known for an arbitrary discrete distribution (or an arbitrary set of points). Our results about the behaviour of the $k$-means cost function provides justification for the same estimation method for any arbitrary discrete distribution or even an arbitrary set of points for certain notions of intrinsic dimension similar to that considered by Raginski and Lazebnik. We provide experimental results for estimation of intrinsic dimension on some common datasets.

\subsubsection*{Organization of the Paper} In Section \ref{sec:bounds} and \ref{sec:kmedian}, we prove the bounds for the $k$-means and $k$-median problems. Section \ref{sec:coresets} details the construction of a $k$-means coreset. In Section \ref{sec:intrinsic}, the technique for estimation of intrinsic dimension of a dataset is described and experimental results are given.

\section{Bounds for Euclidean $k$-means}\label{sec:bounds}
In this section, we prove the bounds on $\Lp{\veps}$. We do this by using ideas in \cite{PeledKushal05} to reduce the high-dimensional case to a one-dimensional case. We start by discussing the upper and lower bounds for the one-dimensional data.
Let the distance of a point $x$ from a set $S$ be denoted as $D(x,S)=\min_{s \in S}\norm{s - x}$.

\begin{lemma}[Upper bound]\label{lem:1}
For any $0 < \veps \leq 1$ and any dataset $X \subseteq \R$ with $n$ points such that it is centered at the origin, there exists a set $S \subseteq \R$ of size $O \left( \sfrac{\log{n}}{\sqrt{\veps}}\right)$ such that $\Phi(S, X) \leq \veps \cdot \Phi(\{0\}, X)$.
\end{lemma}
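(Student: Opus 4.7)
The plan is to apply the standard geometric-scale decomposition of the real line: partition $\R \setminus \{0\}$ into $O(\log n)$ dyadic intervals by distance from the origin, and within each interval place $\Theta(1/\sqrt{\veps})$ uniformly spaced candidate centers. The spacing $\Theta(1/\sqrt{\veps})$ per scale is exactly what the quadratic $k$-means cost demands: to cover a point at distance $r$ from the origin with squared-error at most $\veps \cdot r^2$ one needs a center within distance $\sqrt{\veps}\cdot r$ of it, so in a dyadic interval of length $\Theta(r)$ the number of centers required is $\Theta(r / (\sqrt{\veps}\cdot r)) = \Theta(1/\sqrt{\veps})$.

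In detail, I would set $M = \max_{x \in X}|x|$, so that $\Phi(\{0\}, X) \geq M^2$. For $j = 0, 1, \dots, J$ with $J = \lceil \log_2 n\rceil$, define $I_j^{+} = [M\cdot 2^{-(j+1)},\, M\cdot 2^{-j}]$ and $I_j^{-} = -I_j^{+}$, and place $\Theta(1/\sqrt{\veps})$ evenly spaced points inside each. Any $x \in I_j^{\pm} \cap X$ then has a center within distance $c\sqrt{\veps}\cdot M\cdot 2^{-j}$ for an absolute constant $c$, and since $|x| \geq M\cdot 2^{-(j+1)}$, the squared cost of $x$ against $S$ is at most $O(\veps)\cdot x^2$. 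Summing over scales and both sides gives
\[
\sum_{x \in X,\, |x| \geq M/n} \min_{s \in S} (x-s)^2 \;\leq\; O(\veps)\cdot \Phi(\{0\}, X).
\]

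The second step is to bound the contribution of the ``tail'' points with $|x| < M/n$, which lie closer to the origin than the innermost scale. Their combined $\{0\}$-cost is at most $n\cdot (M/n)^2 = M^2/n$, and any such point has a center from scale $J$ within distance $O(M/n)$ of it, so $\sum_{x:\,|x|<M/n}\min_{s\in S}(x-s)^2 \leq O(M^2/n)$. When $\veps \geq 1/n$ this is at most $O(\veps)\cdot M^2 \leq O(\veps)\cdot \Phi(\{0\},X)$; in the complementary regime $\veps < 1/n$ the inequality $n \leq \log n / \sqrt{\veps}$ allows me simply to take $S = X$, achieving zero cost within the target size. Replacing $\veps$ by $\veps/C$ in the construction for a suitable absolute constant $C$ absorbs the various hidden factors.

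The total number of centers placed is $2(J+1)\cdot \Theta(1/\sqrt{\veps}) = O(\log n/\sqrt{\veps})$, as required. I expect the only delicate point is the constant-chasing when combining the two error sources (the within-scale covering error and the below-scale tail error) into a single factor of $\veps$, and the case split on $\veps$ versus $1/n$ for very small inputs; both are routine but must be handled carefully so that no spurious $\log n$ or $1/\veps$ factor leaks into the final center count.
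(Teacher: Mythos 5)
Your construction for the points with $|x|\ge M/n$ is essentially the paper's: a net on the line whose resolution at distance $r$ from the origin is $\Theta(\sqrt{\veps}\,r)$ (the paper realizes this as a single geometric progression with ratio $1+\sqrt{\veps/2}$, you realize it as $O(\log n)$ dyadic bands each holding $\Theta(1/\sqrt{\veps})$ points; these are the same net up to constants). That part is fine. The problem is your handling of the tail $|x|<M/n$. You bound its cost by $O(M^2/n)$ and then split into two cases, but the second case rests on the claim that $\veps<1/n$ implies $n\le \log n/\sqrt{\veps}$. This is false: $\veps<1/n$ only gives $\log n/\sqrt{\veps}>\sqrt{n}\log n$, and $n\le\sqrt{n}\log n$ fails for all but small $n$. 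The correct threshold for ``just take $S=X$'' is $\veps\lesssim(\log n/n)^2$, so the regime $(\log n/n)^2\ll\veps<1/n$ is covered by neither branch. The gap is real, not just a misstated constant: take one point at $M$ and $n-1$ points spread in $(0,M/(2n))$ at pairwise-generic positions, so $\Phi(\{0\},X)=\Theta(M^2)$; your net guarantees the tail points only a center within $\Theta(M/n)$, so their total cost can be $\Theta(M^2/n)$, which exceeds $\veps\cdot\Phi(\{0\},X)=\Theta(\veps M^2)$ precisely when $\veps<c/n$.

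The fix is exactly the ingredient the paper's proof supplies and your sketch omits: do not leave the innermost region to a single nearby center, but place a \emph{uniform} grid of spacing $\Theta(\sqrt{\veps}\cdot M/n)$ on $[-M/n,M/n]$ (the paper's set $S_1$, with $R=\sqrt{\Phi(\{0\},X)}/n$ in place of $M/n$). This costs only $O(1/\sqrt{\veps})$ additional centers, and it drives each tail point's cost down to $O(\veps M^2/n^2)$, hence the total tail cost to $O(\veps M^2/n)\le O(\veps)\cdot\Phi(\{0\},X)$ for every $\veps$ and $n$, with no case split needed. With that one modification your argument closes and matches the paper's.
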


\begin{proof}
Let $R = \frac{\sqrt{\Phi(\{0\}, X)}}{n} = \frac{\sqrt{\sum_{x \in X} \norm{x}^2}}{n}$.
Let $r = (1 + \sqrt{\veps/2})$ and $t = \lceil \sfrac{\log{n}}{\log{r}} \rceil$.
Let 
\begin{eqnarray*}
\noindent S_1 &=& \{\pm (i \cdot \sqrt{\veps/2} \cdot R) ~|~ 0 \leq i \leq \lfloor \sqrt{2/\veps}\rfloor \} \\
\noindent S_2 &=& \{\pm (r^i \cdot R) ~|~ 0 \leq i \leq t\}
\end{eqnarray*}
We use $S = S_1 \cup S_2$.
Note that for our choice of $t$, there is no point $x \in X$ such that $\norm{x} > r^t \cdot R$.
For any point $x \in X$ such that $r^i R \leq \norm{x} < r^{i+1} R$, we have $D(x, S)^2 \leq (\veps/2) (r^i R)^2 \leq (\veps/2) \norm{x}^2$.
Also, note that for every $x$ such that $\norm{x} \leq R$, we have $D(x, S)^2 \leq (\veps/2) \cdot R^2$. 
So, we have:
\begin{eqnarray*}
&~~&\Phi(S, X)\\
&=& \sum_{x \in X} D(x, S)^2 \\
&=& \sum_{x \in X, \norm{x}\leq R} D(x, S)^2 + \sum_{x \in X, \norm{x} > R} D(x, S)^2\\
&\leq& \sum_{x \in X, \norm{x}\leq R} (\veps/2) R^2 + \sum_{x \in X, \norm{x} > R} (\veps/2) \norm{x}^2\\
&\leq& \veps \cdot \Phi(\{0\}, X).
\end{eqnarray*}
Finally, note that 
$$|S| = |S_1| + |S_2| = O \left( \frac{\log{n}}{\log{(1 + \sqrt{\veps/2})}}\right) = O \left( \frac{\log{n}}{\sqrt{\veps}}\right).$$ 
This completes the proof of the lemma.
\end{proof}

The lower bound instance for the 1-dimensional $k$-means problem below is based on ideas similar to the lower bound instance by \cite{bja}.

\begin{lemma}[Lower bound]\label{lem:2}
For any $0 < \veps < 1/8$, there exists a dataset $X \subseteq \R$ with $n$ points such that any set $S \subseteq \R$ with $\Phi(S, X) \leq \veps \cdot \Phi(\{0\}, X)$ satisfies $|S| = \Omega \left( \sfrac{\log{n}}{\sqrt{\veps}}\right)$.
\end{lemma}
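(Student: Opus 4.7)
The plan is to construct a one-dimensional dataset $X$ with points distributed across $t = \Theta(\log n)$ well-separated \emph{geometric scales}, and then show that any center set $S$ with $\Phi(S, X) \leq \veps \cdot \Phi(\{0\}, X)$ must allocate $\Omega(1/\sqrt{\veps})$ centers to each of a constant fraction of the scales. Concretely, fix a large constant $C$ (say $C = 4$), set $R_i = C^{i-1}$ and $n_i = \lfloor n/C^{2(i-1)}\rfloor$ for $i = 1,\ldots,t$, and at scale $i$ place $n_i$ points evenly spaced in $[R_i, 2R_i]$. Choose $t$ as the largest integer with $n_t \geq c/\sqrt{\veps}$ for a suitable constant $c$; this yields $t = \Theta(\log n)$ in the non-trivial parameter regime (for extremely small $\veps$ the bound is vacuous since $|S| \leq n$). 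By design $n_i R_i^2 = \Theta(n)$ for every $i$, so each scale contributes $\Theta(n)$ to $\Phi(\{0\}, X)$, giving $\Phi(\{0\}, X) = \Theta(t n)$.

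The heart of the argument is a per-scale cost lower bound. For $S \subseteq \R$, define the active region $A_i = [R_i/2, (C/2) R_i]$, which are pairwise disjoint for $C \geq 4$, and set $k_i = |S \cap A_i|$ (so $\sum_i k_i \leq |S|$). I would prove
\[
\sum_{x \text{ at scale } i} \min_{s \in S} (x - s)^2 \;\geq\; \Omega\!\left( \frac{n_i R_i^2}{(k_i+1)^2} \right)
\]
by partitioning the scale-$i$ points according to whether their nearest center in $S$ lies in $A_i$: points whose nearest center lies outside $A_i$ each contribute $\Omega(R_i^2)$ (such centers are at distance $\Omega(R_i)$ from the scale-$i$ interval, provided $C$ is large), while points whose nearest center lies inside $A_i$ contribute, in aggregate, the $k_i$-means cost of a subset of the evenly spaced scale-$i$ points, which is $\Omega(n_i R_i^2/k_i^2)$ by the standard 1-D pigeonhole / variance lower bound. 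Summing over $i$ and plugging into $\Phi(S, X) \leq \veps \cdot \Phi(\{0\}, X) = O(\veps t n)$ gives $\sum_i 1/(k_i+1)^2 \leq O(\veps t)$. A Markov inequality on this sequence then shows that at least $t/2$ of the indices $i$ satisfy $k_i + 1 \geq \Omega(1/\sqrt{\veps})$, so $|S| \geq \sum_i k_i \geq \Omega(t/\sqrt{\veps}) - t = \Omega(\log n/\sqrt{\veps})$, using $\veps \leq 1/8$ to absorb the additive $t$.

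The most delicate step I expect is the per-scale lower bound, which must hold against \emph{arbitrary} placements of the centers in $S$, not just those chosen optimally. Two ingredients go into it: (i) choosing the scale-separation constant $C$ large enough that out-of-region centers are genuinely too far to benefit the scale-$i$ cost, and (ii) invoking the evenly-spaced-points cost lower bound $\Omega(m L^2/k^2)$ against an arbitrary subset of $\Omega(m)$ of the $m$ points (so that the two-case split above is not defeated when many points are drawn toward out-of-region centers). Both are routine one-dimensional computations once the framework is in place; the remainder of the proof is the Markov averaging described above.
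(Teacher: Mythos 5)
Your plan is sound and, modulo the routine constant-chasing you yourself flag, it proves the lemma --- but it is a genuinely different construction and argument from the paper's. The paper uses a \emph{single} geometric progression with ratio $r = 1+\Theta(\sqrt{\veps})$: it places $r^{2(t-i)}$ co-located points at $\pm r^i$ so that every location contributes equally to $\Phi(\{0\},X)$, surrounds each of the $2t = \Theta(\log n/\sqrt{\veps})$ locations with a disjoint interval of radius $\sqrt{2\veps}\,r^i$, and observes that any $S$ leaving $t$ intervals uncovered already pays more than $\veps\cdot\Phi(\{0\},X)$; the entire $1/\sqrt{\veps}$ factor comes from the \emph{density of the scales}, and the argument is a one-line counting/covering bound needing no per-scale quantization lemma and no averaging. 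You instead use $\Theta(\log n)$ constant-ratio scales, spread the points evenly within each scale, and extract the $1/\sqrt{\veps}$ from the standard 1-D $k$-center quantization lower bound $\Omega(mL^2/k^2)$ applied per scale, glued together by a Markov argument over $\sum_i (k_i+1)^{-2}$. Your route is more work --- you must prove the evenly-spaced lower bound against an arbitrary $\Omega(m)$-subset and arbitrary center placements, handle the case $k_i \gtrsim n_i$ where the stated per-scale inequality degenerates, and separately dispose of the regime where $1/\sqrt{2A\veps}-1$ is not yet $\Omega(1/\sqrt{\veps})$ (there a direct ``each scale needs one center'' count gives $|S|=\Omega(t)$, which suffices) --- but it buys robustness: it does not rely on co-located points, and the per-scale bound is the reusable quantization fact. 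Two small corrections for the writeup: with your illustrative $C=4$ the active region $A_i=[R_i/2,2R_i]$ ends exactly at the right endpoint of the scale-$i$ interval, so an out-of-region center just above $2R_i$ can serve a scale-$i$ point at essentially zero cost; take $C\geq 8$ (as your ``provided $C$ is large'' hedge anticipates). Also note that both your $t$ and the paper's are really $\Theta(\log(n\sqrt{\veps}))$, so the $\Omega(\log n/\sqrt{\veps})$ statement is being read, in both proofs, with $\veps$ not too small relative to $n$; your construction is no worse than the paper's on this point.
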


\begin{proof}
In the dataset that we construct multiple points may be co-located.
Let $r = \lceil (1 + \sqrt{32 \veps}) \rceil$.
Consider the dataset $X$ described in the following manner: There are $r^{2(t-1)}$ points co-located at $\pm r$, $r^{2(t-2)}$ points co-located at $\pm r^2$, $r^{2(t-3)}$ points co-located at $\pm r^3$, ..., $1$ point located at $\pm r^{t}$.

We can fix the value of $t$ in the above description in terms of $n = |X|$ by noting that: $n = 2 \cdot (1 + r^{2}  + ... +  r^{2(t-1)})$.
Given this, we have $t = \Omega (\log_r{ ({n(r^2-1)}/{2} + 1)})$.
The cost with respect to single center at the origin is given by:
$\Phi(\{0\}, X) = 2 \cdot (r^2 \cdot r^{2(t-1)} + r^4 \cdot r^{2(t-2)} + ... + r^{2t}) = 2t r^{2t}$.
Consider intervals around each of the populated locations. Let $I^{+}_{i} = [r^i(1 - \sqrt{2\veps}), r^i (1 + \sqrt{2\veps})]$ and $I^{-}_{i} = [-r^i(1 + \sqrt{2\veps}), -r^i (1 - \sqrt{2\veps})]$. Note that these intervals are disjoint for our choice of $r$. Consider any set $S$ with less than $t$ points. Note that there will be at least $t$ intervals that do not contain a point from the set $S$.
The points located at each of these ``uncovered'' intervals contribute a cost of at least $(2\veps) r^{2t}$. Given this, we have $\Phi(S, X) > t \cdot (2\veps) r^{2t} > \veps \cdot \Phi(\{0\}, X)$. So, for any set $S$ such that $\Phi(S, X) \leq \veps \cdot \Phi(\{0\}, X)$, we have $|S| > t$ which gives 
$|S| > t = \Omega(\sfrac{\log{n}}{\log{\lceil(1 + \sqrt{32 \veps}})\rceil})$ which gives the statement of the lemma.
\end{proof}

We will now extend these bounds to higher dimensions using the ideas of \cite{PeledKushal05}. We will use $\veps$-covering and $\veps$-packing numbers over the surface of unit spheres crucially in our construction. We now show the upper bound for points in $\R^d$.

\begin{theorem}[Upper bound for Euclidean $k$-means]\label{thm:1}
Let $\Sp$ denote the surface of the unit sphere in $\R^{d}$.
For any $0 < \veps \leq 1$, any integer $k\geq 1$, and any dataset $X \subseteq \R^d$ with $n$ points, there exists a set $\xi \subseteq \R^d$ of size $O \left( \frac{\N(\Sp, \sqrt{\veps/2}) \cdot k \cdot \log{n}}{\sqrt{\veps}}\right)$ such that $\Phi(\xi, X) \leq \veps \cdot \Delta(X,k)$.
\end{theorem}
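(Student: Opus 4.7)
The plan is to reduce the $d$-dimensional problem to the one-dimensional bound of Lemma~\ref{lem:1}, using a covering of the unit sphere to discretize directions and the 1D grid of Lemma~\ref{lem:1} to discretize radii along each direction. First I would fix an optimal $k$-means solution $C^\star = \{c_1^\star, \ldots, c_k^\star\}$ with Voronoi clusters $X_1, \ldots, X_k$, so that $\Delta_k(X) = \sum_i \Phi(\{c_i^\star\}, X_i)$. It then suffices to construct, for each cluster $X_i$, a set $\xi_i$ of size $O(\N(\Sp, \sqrt{\veps/2}) \log n / \sqrt{\veps})$ such that $\Phi(\xi_i, X_i) \leq \veps \cdot \Phi(\{c_i^\star\}, X_i)$, and then take $\xi := \bigcup_i \xi_i$.

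For each cluster translate so that $c_i^\star = 0$, and set $\Phi_i := \Phi(\{0\}, X_i)$, $n_i := |X_i|$, $R_i := \sqrt{\Phi_i}/n_i$. Let $N \subseteq \Sp$ be a $\sqrt{\veps/2}$-cover of size $\N(\Sp, \sqrt{\veps/2})$, and let $T \subseteq \R$ be the 1D grid from the proof of Lemma~\ref{lem:1} with error parameter $\veps/2$ at scale $R_i$ (an arithmetic grid of spacing $\sqrt{\veps/4}\,R_i$ on $[-R_i, R_i]$ together with a geometric grid $\pm(1+\sqrt{\veps/4})^j R_i$ extending to radius at least $\sqrt{\Phi_i} = n_i R_i$, so $|T| = O(\log n/\sqrt{\veps})$). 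Define $\xi_i := \{\,t\,u : u \in N,\ t \in T\,\}$, so that $|\xi_i| = O(\N(\Sp, \sqrt{\veps/2}) \log n / \sqrt{\veps})$.

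For the per-point error analysis I would fix $x \in X_i$, let $u^\star \in N$ minimise $\|u - x/\|x\|\|$ over $u \in N$, set $\alpha := \langle x, u^\star \rangle$ and $w := x - \alpha u^\star$ (so $w \perp u^\star$), and let $t^\star \in T$ minimise $|\alpha - t^\star|$. The cover guarantee $\|u^\star - x/\|x\|\|^2 \leq \veps/2$ implies $\alpha/\|x\| \geq 1 - \veps/4$, so the \emph{angular error} satisfies $\|w\|^2 = \|x\|^2 - \alpha^2 \leq (\veps/2)\|x\|^2$. The per-point 1D estimate from the proof of Lemma~\ref{lem:1} (invoked with parameter $\veps/2$) gives the \emph{radial error} $(\alpha - t^\star)^2 \leq (\veps/4)\max(\alpha^2, R_i^2)$. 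By the Pythagorean identity, $\|x - t^\star u^\star\|^2 = (\alpha - t^\star)^2 + \|w\|^2$; summing over $x \in X_i$ and using $\alpha^2 \leq \|x\|^2$ and $n_i R_i^2 = \Phi_i/n_i \leq \Phi_i$ gives $\Phi(\xi_i, X_i) \leq (\veps/2)\Phi_i + (\veps/4)(\Phi_i + n_i R_i^2) \leq \veps \Phi_i$, as required.

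I expect the main technical hurdle to be the orthogonal bookkeeping that lets angular and radial errors add via Pythagoras rather than interact multiplicatively, together with calibrating the two discretisation scales (cover radius $\sqrt{\veps/2}$ and 1D-grid error $\veps/2$) so that the two contributions sum to exactly $\veps \Phi_i$. A smaller verification is that the geometric radial grid is wide enough to contain each $\alpha \in [-\sqrt{\Phi_i}, \sqrt{\Phi_i}]$, which holds because $|\alpha| \leq \|x\| \leq \sqrt{\Phi_i} = n_i R_i$ for every $x \in X_i$.
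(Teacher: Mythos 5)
Your proposal is correct and follows essentially the same route as the paper: cover the unit sphere of directions at scale $\sqrt{\veps/2}$, project each point of a cluster onto its nearest fan line (bounding the angular error by $(\veps/2)\Delta_1(X_i)$ via the Pythagorean decomposition), and discretise radially along each line using the grid of Lemma~\ref{lem:1} with parameter $\veps/2$. The only cosmetic difference is that you use one global radial grid at scale $R_i$ for the whole cluster while the paper invokes Lemma~\ref{lem:1} separately on each line's projected points; both yield the same $O\bigl(\N(\Sp,\sqrt{\veps/2})\cdot\log n/\sqrt{\veps}\bigr)$ size per cluster and the same error accounting.
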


\begin{proof} Let $C = \{c_1,\ldots,c_k\}$ denote the optimal centers for $k$-means on $X$ and let $X_1,\ldots,X_k$ be the optimal clusters. This also means that for all $i$, $c_i$ is the centroid of the point set $X_i$. It will be sufficient to find $k$ sets $\xi_1,\ldots,\xi_k$ such that for all $i$, $\Phi(\xi_i, X_i) \leq \veps \cdot \Phi(\{c_i\}, X_i)  = \veps \cdot \Delta({X_i},1)$. 
Given such sets $\xi_1,\ldots,\xi_k$, let $\xi = \cup_i \xi_i$. Then, we have $\Phi(\xi, X) \leq \sum_{i} \Phi(\xi_i, X_i) \leq \sum_i \veps \cdot \Delta({X_i},1) = \veps \cdot \Delta(X,k)$.

Consider any optimal cluster $X_i$. Let $Y = X_i$ and $c = c_i$. Let points in the set $Y$ be denoted as $y_1,\ldots,y_m$. We will now construct a set $S$ such that $\Phi(S, Y) \leq \veps \cdot \Phi(\{c\}, Y)$. 
Consider a unit sphere $\Sp$ around $c$ and let $R$ denote an $(\sqrt{{\veps}/{2}})$-covering set over $\Sp$ (note that the size of $R$ is $\N(\Sp, \sqrt{\veps/2})$).
This implies that for any point $r_1 \in \Sp$, there exists $r_2 \in R$ such that $r_2 \neq r_1$ and $\norm{r_1 - r_2} \leq \sqrt{\veps/2}$.
Consider a ``fan'' consisting of $|R|$ lines $F = \{l_1,\ldots, l_{|R|}\}$ connecting $c$ to each of the points in $R$. For any point $y \in Y$, let $y'$ denote the projection of $y$ on the nearest line among $l_1,\ldots,l_{|R|}$. Let $Y'$ denote the set of projected points. For any line $l \in \{l_1,\ldots,l_{|R|}\}$, let $Y_l$ denote the subset of projected points that are on line $l$.
We first observe that ``snapping'' the points to the fan $F$ does not cost much. This follows easily from the following simple observation. For all $y \in Y$, we have $\norm{y - y'} \leq \sqrt{\veps/2} \cdot \norm{y - c}$.

This implies that
\begin{equation}\label{eqn:1}
\mathcal{E} \equiv \sum_{y \in Y} \norm{y - y'}^2 \leq {\veps}/{2} \cdot \sum_{y \in Y} \norm{y - c}^2 = {\veps}/{2} \cdot \Delta(Y,1).
\end{equation}

We have:
\begin{equation}\label{eqn:2}
\Delta(Y,1) = \mathcal{E} + \sum_{l \in F} \Phi(\{c\}, Y_l)
\end{equation}
Now, from Lemma~\ref{lem:1}, we know that for every $l \in F$, there exists a set $S_l$ of $O \left(\sfrac{\log{|Y_l|}}{\sqrt{\veps}} \right)$ points on line $l$ such that $\Phi(S_l, Y_l) \leq (\veps/2) \cdot \Phi(\{c\}, Y_l)$.
Let $S = \cup_l S_l$. Then we have
\begin{eqnarray*}
&~~&\Phi(S, Y)\\
&=& \mathcal{E} + \sum_{y \in Y'} D(y, S)^2 \\
&\leq& \mathcal{E} + \sum_{l \in F} \sum_{y \in Y_l} D(y, S_l)^2 \\
&=& \mathcal{E} + \sum_{l \in F} \Phi(S_l, Y_l) \\
&\leq& \mathcal{E} + (\veps/2) \cdot \sum_{l \in F} \Phi(\{c\}, Y_l) \quad \textrm{(using Lemma~\ref{lem:1})}\\
&\leq& \mathcal{E} + (\veps/2) \cdot \Delta(Y,1) \quad \textrm{(using (\ref{eqn:2}))}\\
&\leq& (\veps/2) \cdot \Delta(Y,1) + (\veps/2) \cdot \Delta(Y,1) \quad \textrm{(using (\ref{eqn:1}))}\\
&=& \veps \cdot \Delta(Y,1)
\end{eqnarray*}

The size of the set $S$ is $O \left(\frac{\N(\Sp, \sqrt{\veps/2}) \cdot \log{|Y|}}{\sqrt{\veps}} \right)$. 
Repeating the same for all $k$ optimal clusters, we get a set $\xi$ of size $O \left( \frac{\N(\Sp, \sqrt{\veps/2})  \cdot k \cdot \log{n}}{\sqrt{\veps}}\right)$.
\end{proof}

Note that our upper bound is better than the bound of \cite{streamkm} and the improvement may be attributed to reducing the $d$-dimensional case to a $1$-dimensional case. We also give a lower bound below which essentially shows that the factors of $k$, $\log{n}$ and $(1/\veps)^{d/2}$ are unavoidable.


\begin{theorem}[Lower bound for Euclidean $k$-means]\label{thm:2}
For any $0 < \veps < 1/8$, any integer $k \geq 1$, there exists a dataset $X \subseteq \R^d$ with $n$ points such that any set $\xi \subseteq \R^d$ with $\Phi(\xi, X) \leq \veps \cdot \Delta(X,k)$ satisfies $|\xi| = \Omega \left( \frac{\Pa(\Sp, \sqrt{8 \veps}) \cdot k \cdot \log{n}}{\sqrt{\veps}} \right)$.
\end{theorem}

\begin{proof}
Let $c_1,\ldots,c_k \in \R^d$ be $k$ points such that $\forall i \neq j, \norm{c_i - c_j} > n^2$.
We will define $k$ sets of points $X_1,\ldots,X_k$, and our dataset will be $X = \cup_i X_i$.
Let $R$ denote an $\sqrt{8 \veps}$-packing set over a unit sphere in $\R^d$ (note that the size of $R$ is the packing number and is denoted by $\Pa(\Sp, \sqrt{8\veps})$).
For any $i$, here is the description of the set $X_i$: Let $\Sp(c_i)$ denote a unit sphere around $c_i$ and let $R(c_i)$ denote the $(\sqrt{8\veps})$-packing set laid over $\Sp(c_i)$. Consider a ``fan'' $F$ of lines connecting $c_i$ to each point in the set $R(c_i)$. Each line has $\eta = \sfrac{n}{(k \cdot \Pa(\Sp, \sqrt{8 \veps}))}$ points and these points are arranged as in the one-dimensional example of Lemma~\ref{lem:2}.

The analysis is very similar to that in Lemma~\ref{lem:2}. Instead of considering intervals around each populated location, we will consider balls of certain radius.
As in Lemma~\ref{lem:2}, we use $r = \lceil (1 + \sqrt{32 \veps}) \rceil$.
Also, $t = \Theta \left( \sfrac{\log{\eta}}{\log{r}}\right)$.
The populated locations are at distances $r, r^2,\ldots,r^t$ from the $c_i$'s.
We have $\Delta({X_i},1) = \Phi(\{c_i\}, X_i) = (2 t r^{2t}) \cdot \Pa(\Sp, \sqrt{8 \veps})$ which gives $\Delta(X,k) = \sum_i \Phi(\{c_i\}, X_i) = k \cdot (2 t r^{2t}) \cdot \Pa(\Sp, \sqrt{8 \veps})$.
Consider balls of radius $\sqrt{2 \veps} \cdot r^j$ around any populated location at a distance $r^j$ from $c_i$. Note that all of these balls are disjoint because of our choice of $r$ and because of the fact that the populated points are defined using an $\sqrt{8 \veps}$-packing set over unit sphere. The number of balls defined is $2kt \cdot \Pa(\Sp, \sqrt{8 \veps})$. Consider any set $S$ containing less than $kt \cdot \Pa(\Sp, \sqrt{8 \veps})$ points. There are at least $kt \cdot \Pa(\Sp, \sqrt{8 \veps})$ balls that do not contain any points from $S$. The cost contribution from the points located in each of these balls is at least $(2 \veps) r^{2t}$. So, $\Phi(S, X) > (2 \veps) \cdot kt \cdot \Pa(\Sp, \sqrt{8 \veps}) \cdot r^{2t} > \veps \cdot \Delta(X,k)$. Therefore, any set $\xi$ for which $\Phi(\xi, X) \leq \veps \cdot \Delta(X,k)$ satisfies 
\begin{align*}
|\xi| = & \Omega(kt \cdot \Pa(\Sp, \sqrt{8 \veps})) = \Omega \left( k \cdot \frac{\log{\eta}}{\log{r}} \cdot \Pa(\Sp, \sqrt{8 \veps})\right) \\
= & \Omega \left( \frac{\Pa(\Sp, \sqrt{8 \veps}) \cdot k \cdot \log{n}}{\sqrt{\veps}}\right)
\end{align*}
This completes the proof of the theorem.
\end{proof}

\subsection{Bounds for Euclidean $k$-median}
The Euclidean $k$-median problem is very similar to the Euclidean $k$-means problem except that the cost function is defined using ``sum of distances'' rather than ``sum of squared distances''. 
Given $X \subseteq \R^d$ and a positive integer $k$, find a set of centers $C \subseteq \R^d$ such that $\Phi(C, X) \stackrel{def}{=} \sum_{x \in X} \min_{c \in C} D(x, c)$ is minimized, where $D(x, y) = \norm{x - y}$. We obtain the following bounds, the details of which we omit here, by replacing the squared Euclidean distances with the Euclidean distances in all the above arguments.

\begin{theorem}[Upper bound for Euclidean $k$-median]\label{thm:med1}
For any $0 < \veps \leq 1$, any integer $k\geq 1$, and any dataset $X \subseteq \R^d$ with $n$ points, there exists a set $\xi \subseteq \R^d$ of size $O \left( \frac{\N(\Sp, \veps/2) \cdot k \cdot \log{n}}{\veps}\right)$ such that $\Phi(\xi, X) \leq \veps \cdot \Delta(X,k)$.
\end{theorem}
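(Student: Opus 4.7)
The plan is to mirror the proof of Theorem~\ref{thm:1} line by line, replacing the squared Euclidean distance with the Euclidean distance itself throughout. Three pieces are needed: a one-dimensional building block analogous to Lemma~\ref{lem:1} but for the $k$-median cost; a fan-snapping step that routes each optimal cluster onto a set of rays through its center with additive error controlled by the triangle inequality; and aggregation over the $k$ optimal clusters.

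First I would establish the one-dimensional statement: for every dataset $X \subseteq \R$ with $n$ points, there is a set $S \subseteq \R$ of size $O(\log n / \veps)$ with $\Phi(S, X) \le \veps \cdot \Phi(\{0\}, X)$ under the $k$-median cost. The construction is the same two-scale grid used in the proof of Lemma~\ref{lem:1}, with the ratio tuned to a linear (rather than quadratic) error budget: set $R = \Phi(\{0\}, X)/n$, use an arithmetic grid of spacing $(\veps/4) R$ covering $[-R, R]$, and a geometric grid of ratio $r = 1 + \veps/4$ outside. Since $\sum_{x \in X} |x| \ge \max_{x} |x|$, all points lie within $[-nR, nR]$, so $O(\log n / \log(1 + \veps/4)) = O(\log n / \veps)$ geometric levels suffice. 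The per-point bound is $D(x, S) \le (\veps/4)\max(R, |x|)$, which sums to at most $\veps \cdot \Phi(\{0\}, X)$ after splitting into the near and far regions.

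Next I would carry out the reduction to the one-dimensional case in the style of Theorem~\ref{thm:1}. Fix optimal $k$-median centers $c_1, \dots, c_k$ with Voronoi cells $X_1, \dots, X_k$, so that $\Delta_k(X) = \sum_i \Phi(\{c_i\}, X_i)$. For each cluster $Y = X_i$ with center $c = c_i$, take an $(\veps/2)$-covering of the unit sphere around $c$ (of size $\N(\Sp, \veps/2)$) and form the fan $F$ of rays from $c$ through these covering points. Snap each $y \in Y$ to its nearest point $y'$ on $F$; the covering property now yields $\|y - y'\| \le (\veps/2)\,\|y - c\|$ (linear in $\veps$, not $\sqrt{\veps}$). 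Apply the 1D lemma above on each ray $l \in F$ to the projected points $Y_l$, with the cluster center $c$ playing the role of the origin, to obtain $S_l \subseteq l$ of size $O(\log |Y_l| / \veps)$ with $\Phi(S_l, Y_l) \le (\veps/4) \cdot \Phi(\{c\}, Y_l)$.

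Finally I would combine the two error sources by the triangle inequality: for $y \in Y$ with snap-image on ray $l$, write $D(y, S_l) \le D(y, y') + D(y', S_l)$ and sum. The snapping term contributes at most $(\veps/2)\,\Phi(\{c\}, Y)$, and the on-line term contributes at most $(\veps/4)\,\Phi(\{c\}, Y)$ after using $\sum_l \Phi(\{c\}, Y_l) = \Phi(\{c\}, Y)$, so the total is at most $\veps \cdot \Phi(\{c\}, Y)$ up to adjusting the constants on $\veps$ at each step. Aggregating over the $k$ clusters and taking $\xi = \bigcup_i S^{(i)}$ delivers a set of size $O\bigl(\N(\Sp, \veps/2) \cdot k \cdot \log n / \veps\bigr)$ with $\Phi(\xi, X) \le \veps \cdot \Delta_k(X)$. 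The only genuine departure from the $k$-means argument is the loss of the Pythagorean identity $\|y - s\|^2 = \|y - y'\|^2 + \|y' - s\|^2$, which forced the neat additive split in Theorem~\ref{thm:1}; here we must instead use the triangle inequality and absorb a small multiplicative slack into constants. I expect this bookkeeping to be the most delicate step, but not a real obstacle.
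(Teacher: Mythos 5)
Your proposal is correct and follows exactly the route the paper intends: the paper omits the proof of Theorem~\ref{thm:med1}, stating only that it is obtained from the proof of Theorem~\ref{thm:1} by replacing squared distances with distances, and your one-dimensional grid, fan-snapping via an $(\veps/2)$-cover of the unit sphere, and per-cluster aggregation are precisely that adaptation. You also correctly flag and resolve the one genuine difference --- the exact additive split via the Pythagorean identity is unavailable for the median cost and must be replaced by the triangle inequality $D(y,S_l)\le D(y,y')+D(y',S_l)$ together with $\Phi(\{c\},Y_l)\le\Phi(\{c\},Y)$ --- so no gap remains.
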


\begin{theorem}[Lower bound for Euclidean $k$-median]\label{thm:med2}
For any $0 < \veps < 1/8$, any integer $k \geq 1$, there exists a dataset $X \subseteq \R^d$ with $n$ points such that any set $\xi \subseteq \R^d$ with $\Phi(\xi, X) \leq \veps \cdot \Delta(X,k)$ satisfies $|\xi| = \Omega \left( \frac{\Pa(\Sp, 4\veps) \cdot k \cdot \log{n}}{\veps} \right)$.
\end{theorem}

\section{Bounds for Metric $k$-Median}\label{sec:kmedian}
In this section we obtain bounds for $\Lp{\veps}$ for the metric $k$-median problem over any arbitrary metric space $(\X, D)$. Given $X \subseteq \X$ and a positive integer $k$, the metric $k$-median problem asks to find a set $C \subseteq \X$ of $k$ centers such that $\Phi(C, X) \equiv \sum_{x \in X} \min_{c \in C} D(x, c)$ is minimized.
We obtain bounds for $\Lp{\veps}$ in terms of the doubling dimension of the metric. Let us first recall the notion of doubling dimension. Given $X \subseteq \X$, the diameter of the set $X$, $dia(X)$, is defined as $dia(X) = \max_{x, x' \in X}{D(x, x')}$. Given any set $X \subseteq \X$ and $r \in \R^{+}$, a set $\{X_1, X_2,\ldots,X_m\}$ is said to be an $r$-cover of $X$ iff $\cup_i X_i = X$ and for all $1 \leq i \leq m, dia(X_i) \leq r$. Given $X \subseteq \X$ and $r \in \R^{+}$, the covering number of the set $X$ with respect to diameter $r$, denoted by $\lambda(X, r)$, is the size of the $r$-cover of smallest cardinality. We now define the doubling dimension of any metric $(\X, D)$.

\begin{definition}[Doubling dimension]
The doubling dimension of any metric $(\X, D)$ is the smallest integer $d$ such that for every $X \subseteq \X$, $\lambda \left(X, dia(X)/2 \right) \leq 2^d$.
\end{definition}

The remaining discussion will be with respect to the doubling dimension $d$ of any metric $(\X, D)$. We will use the following lemma for defining our upper bound.

\begin{definition}[$\veps$-covering number] Let $0 < \veps < 1$ be some precision parameter. For any $c \in \X$ and $r \in \R^{+}$, let $M(c, r) = \{x~|~x \in \X \textrm{ and } D(c, x) \leq r\}$. The $\veps$-covering number, denoted using $\gamma_{\veps}$, is defined as $\gamma_{\veps} = \max_{c \in \X, r \in \R^{+}}{(\lambda(M(c, r), \veps \cdot r))}$.
\end{definition}

In the next lemma, we give a bound on the $\veps$-covering number for any metric with doubling dimension $d$.

\begin{lemma}\label{lem:met4} Let $(\X, D)$ be any metric space. For any $0 < \veps < 1$, the $\veps$-covering number of $\X$, $\gamma_{\veps} = O \left((4/\veps)^d\right)$.
\end{lemma}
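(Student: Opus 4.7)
The plan is to iterate the doubling property. By the triangle inequality, $\mathrm{dia}(M(c,r)) \leq 2r$, so the definition of doubling dimension gives a cover of $M(c,r)$ by at most $2^d$ subsets, each of diameter at most $r$. Applying the doubling property to each of these subsets in turn yields a refinement consisting of at most $2^{2d}$ subsets, each of diameter at most $r/2$, and so on. After $t$ iterations we obtain a cover of $M(c,r)$ by at most $2^{td}$ subsets, each of diameter at most $2r \cdot 2^{-t} = 2^{1-t} r$.

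Next I would choose the smallest $t$ so that the diameter of each piece in the refined cover is at most $\veps \cdot r$; that is, $2^{1-t} \leq \veps$, so it suffices to take $t = \lceil \log_2(2/\veps) \rceil$. With this choice the total number of pieces in the cover is at most
\[
2^{td} \;\leq\; 2^{d\,\lceil \log_2(2/\veps) \rceil} \;\leq\; 2^{d\,(\log_2(2/\veps) + 1)} \;=\; \left(\frac{4}{\veps}\right)^d,
\]
which is exactly the desired bound $O\!\left(\frac{1}{(\veps/4)^d}\right)$. Since the choice of $c$ and $r$ was arbitrary, taking the maximum over all $c,r$ yields the claimed bound on $\gamma_\veps$.

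The only mildly delicate point is the one-off factor coming from the fact that $M(c,r)$ has diameter up to $2r$, not $r$, which is why the final constant is $4$ rather than $2$; I do not anticipate any real obstacle. The argument uses only the definition of doubling dimension and a straightforward geometric series of halvings, so it is essentially a clean induction on $t$.
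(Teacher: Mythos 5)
Your proof is correct and follows essentially the same route as the paper: iterate the doubling property $t$ times starting from $\mathrm{dia}(M(c,r)) \leq 2r$, stop when the piece diameters drop below $\veps r$, and account for the extra factor of $2$ from the initial diameter, yielding the $(4/\veps)^d$ bound. No issues.
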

\begin{proof} The proof follows from the definition of $\lambda$. Consider any $c \in \X$ and any $r \in \R^{+}$. Let $X = M(c, r)$. From the triangle inequality we know that $dia(X) \leq 2r$. Using the fact that the doubling dimension of the metric is $d$, we get $\lambda(X, r) \leq 2^d, \lambda(X, r/2) \leq 2^{2d}$,  $\lambda(X, r/2^2) \leq 2^{3d}$ and so on. Let $\sfrac{1}{2^{k+1}} < \veps \leq \sfrac{1}{2^k}$. Then we have $\lambda(X, \veps \cdot r) \leq 2^{(k+2)d}$. This gives us $\lambda(X, \veps \cdot r) = O((4/\veps)^d)$.
\end{proof}

We can now give the upper bound in terms of the $\veps$-covering number of the metric.

\begin{theorem}[Upper bound for metric $k$-median]
Let $(\X, D)$ be any metric space with doubling dimension $d$.
For any $0 < \veps \leq 1$, any integer $k\geq 1$, and any dataset $X \subseteq \X$ with $n$ points, there exists a set $\xi \subseteq \X$ of size $O \left( \frac{k \cdot \log{n}}{(\veps/8)^d}\right)$ such that $\Phi(\xi, X) \leq \veps \cdot \Delta(X,k)$.
\end{theorem}
\begin{proof} Let $c_1,\ldots,c_k$ be the optimal $k$ centers and let $X_1,\ldots,X_k$ be the optimal clusters with respect to $c_1,\ldots,c_k$. It is sufficient to show that for all $i$, there exists a set $\xi_i \subseteq X_i$ of size $O \left(\sfrac{\gamma_{\veps} \cdot \log{n}}{\veps} \right)$ such that $\Phi(\xi_i, X_i) \leq \veps \cdot \Delta({X_i},1)$. Let $c = c_i$ and $Y = X_i$. Let $R = {\sum_{y \in Y} D(y, c)}/{|Y|}$ and $t = \Theta \left( \log_{2}{|Y|}\right)$. Consider ``concentric circles'' of radius $R, 2R, 2^2 R,\ldots, 2^t R$ around the center $c$. That is consider the sets $Y_0,\ldots, Y_t$ defined in the following manner.
\begin{eqnarray*}
Y_0 &=& \{y \in Y | D(c, y) \leq R\} \quad \textrm{and} \\ 
Y_{j} &=& \{y \in Y | D(c, y) \leq 2^j \cdot R\} \setminus Y_{j-1} \quad \textrm{for}~ 1\leq j\leq t
\end{eqnarray*}
First, note that $\cup_{j=0}^{t} Y_j = Y$ due to our choice of $t$.
Also, for every $j$, there exists subsets $S_j^{1}, S_j^{2},\ldots, S_j^{\gamma_{\veps/2}}$ such that for all $1 \leq i \leq \gamma_{\veps/2}, dia(S_j^{i}) \leq (\veps/2) \cdot (2^j R)$ and $\cup_{i=1}^{\gamma_{\veps/2}} S_j^{i}= Y_j$. 
Let $S_j$ be the set of points constructed by picking one point from each of the sets $S_j^{1},\ldots, S_j^{\gamma_{\veps/2}}$ and let $S = \cup_{j=0}^{t} S_j$. We have,
\begin{eqnarray*}
&~~&\Phi(S, Y)\\
&\leq& \sum_{j=0}^{t} \Phi(S_j, Y_j) \leq \sum_{j=0}^{t} {\veps}/{2} \cdot (2^j R) \cdot |Y_j| \\
&=& \sum_{j=0}^{t} \veps (2^{j-1} R) \cdot |Y_j| \\
&\leq& \sum_{j=0}^{t} \veps \cdot \Phi(\{c\}, Y_j) \leq \veps \cdot \Phi(\{c\}, Y) \leq \veps \cdot \Delta(Y,1)
\end{eqnarray*}

The size of the set $S$ is given by $|S| = \gamma_{\veps/2} \cdot t = O(\gamma_{\veps/2} \cdot \log{|Y|})$.
Generalizing this for all sets $X_1,\ldots,X_k$, we get that there is a set $\xi \subseteq \X$ of size $O(k \cdot \gamma_{\veps/2} \cdot \log{n})$ such that $\Phi(\xi, X) \leq \veps \cdot \Delta(X,k)$. The theorem follows from the bound on $\gamma_{\veps/2}$ from Lemma~\ref{lem:met4}.
\end{proof}

The lower bound below follows trivially from the lower bound for the Euclidean $k$-median problem. 

\begin{theorem}[Lower bound for metric $k$-median]
For any $0 < \veps \leq 1/8$, any integer $k\geq 1$, there exists a metric space $(\X, D)$ with doubling dimension $d$ and a dataset $X \subseteq \X$ with $n$ points, such that for any set $\xi \subseteq \X$ with $\Phi(\xi, X) \leq \veps \cdot \Delta(X,k)$, $\xi$ is of size $\Omega \left( \frac{k \cdot \log{n}}{(16\veps)^{d-1}}\right)$.
\end{theorem}

\section{Coresets for $k$-means} \label{sec:coresets}

We start by recalling the notion of a movement-based coreset, defined as follows.

\begin{definition}[$(k, \veps)$-movement-based coreset]
A $(k, \veps)$-movement-based coreset of a dataset $X \subseteq \R^d$ is a set of points $S \subseteq \R^d$ such that $\Phi(S, X) \leq \veps \cdot \Delta(X,k)$.

\end{definition}

We will now see that movement-based coreset is a stronger notion than coreset in the sense that if any dataset $X$ has a $(k, \veps^{O(1)})$-movement-based coreset of size $m$, then it also has a $(k, \veps)$-coreset of size $m$.
A $(k, \veps)$-coreset is defined by a set $S$ and a weight function $w$, whereas a movement-based coreset is defined by just a set of points.
We will show that for any $(k, \veps)$-movement based coreset $S$, this set along with an appropriately defined weight function (which is dependent  just on $S$ and $X$) is also a $(k, \veps)$-coreset.
Let $X$ and $Y$ be any set of points.
For any point $y \in Y$, we define $\N_X(y)$ to be the set of points from $X$ such that their closest point in set $Y$ is $y$. That is, $\N_X(y) = \{x | x \in X \textrm{ and } \arg\min_{p \in Y} \norm{x - p} = y\}$. We give the proof of the result that a $(k,\veps)$-movement-based coreset implies a $(k,\veps)$-coreset in Appendix \ref{sec:appen2}. We note that this result was implicitly present in \cite{streamkm}.

We have argued that the notion of a movement-based coreset is a stronger notion than a coreset. As far as the existence of such movement-based coresets are concerned, we know from previous discussions that for any dataset $X$, there exists $(k, \veps)$-movement-based coreset of size $\Lp{\veps}$. Moreover, such movement-based coresets may be computed by running any constant factor pseudo-approximation algorithm for $k$-means with $k$ set as $\Lp{O(\veps)}$. As seen in the previous subsection, the $D^2$-sampling algorithm is one such algorithm. Combining Theorem~\ref{thm:pseudo} with Theorem~\ref{thm:coreset-1} given in the Appendix, we get the following result.

\begin{theorem} There is a universal constant $\beta$ for which the following holds: For any $0 < \veps \leq 1$, positive integer $k$ and any dataset $X \subseteq \R^d$, let $S$ denote the a set of centers chosen with $D^2$-sampling from $X$ such that $|S| = \Omega(\Lp{\veps^2/\beta})$. Then, $S$ is a $(k, \veps)$-coreset with high probability.
\end{theorem}

\section{Estimation of Intrinsic dimension} \label{sec:intrinsic}

We consider the following three notions for estimation of intrinsic dimension of a dataset.

\begin{enumerate}
\item {\it Affine subspace}: Suppose a dataset $X \subseteq \mathbb{R}^D$ sits in an affine subspace of dimension $d < D$. Theorem \ref{thm:main1} suggests that $\Lp{\veps}$ has a dependency of the form $({1}/{\veps})^{d/2}$ over $\veps$. So, if $m$ is the slope of the log-log plot of $\Delta(X,k)$ versus $k$, then $(-2.0)/m$ may be a good estimate of the dimension $d$. Please note that we are making a number of heuristic adjustments while making the previous statement. This estimation technique is a heuristic and Theorem 1 only provides a high-level justification for this heuristic. The only way to have more confidence would be to test the method on real data. This is also true for the two cases discussed below.

\item {\it Doubling Dimension}: Same as in the previous case but now using Theorem \ref{thm:metric-median} instead of Theorem \ref{thm:main1}, we get that $(-1.0)/m$ should provide a good estimate for the doubling dimension of the dataset $X$ where $m$ is the slope of the log-log plot of $\Delta(X,k)$ versus $k$. Here, $\Delta(X,k)$ denotes the $k$-median cost.

\item {\it Covariance Dimension}: Suppose a dataset $X \subseteq \mathbb{R}^D$ does not precisely belong to a $d$-dimensional affine subspace as in the first case but is close to one. This can be made more precise in the following manner: Suppose for simplicity that $X$ is centered around the origin. For an error parameter $\gamma$, the dataset $X$ is said to have a covariance dimension $d_{\gamma}$ if the sum of the first $d_{\gamma}$ eigenvalues of the covariance matrix of $X$ is at least $(1-\gamma)$ times the sum of all eigenvalues. This translates to the fact that the projection of all points to the space orthogonal to the space spanned by the first $d_{\gamma}$ eigenvectors has a very small contribution to the 1-means cost (only a $\gamma$ fraction).
So, as in the first case, $\Lp{\veps}$ will have dependency of the form $({1}/{\veps})^{d_{\veps}/2}$ over $\veps$. So, $(-2.0)/m$ should provide a good estimate for the covariance dimension of $X$ where $m$ is the slope of the log-log plot of $\Delta({X},k)$ versus $k$.
\end{enumerate}

So, our technique for dimension estimation basically involves estimating the slope of the log-log plot of $k$ versus $\Delta(X,k)$.
The main issue in obtaining such a plot is computing $\Delta(X,k)$ which is the optimal $k$-means cost for dataset $X$.
This is because $k$-means is an $\mathsf{NP}$-hard problem when $k > 1$.
There are a number of good approximation algorithms for the $k$-means problem and we can get an approximate value of $\Delta(X,k)$ for all values of $k$.
This leads us to an efficiency issue.
The issue is that we will need to repeatedly run the approximation algorithm for different values of $k$.
The $k$-means++ seeding algorithm is an interesting approximation algorithm for this context and provides a useful solution for this issue.
Note that the $k$-means++ seeding algorithm uses input $k$ only as a termination condition.
That is, it continues to sample centers using $D^2$-sampling as long as the number of centers is less than $k$.
The approximation analysis says that for any fixed value of $i$, the first $i$ centers sampled by the algorithm provides $O(\log{i})$ approximation to the $i$-means objective (in expectation).
So, if $C_i$ denotes the first $i$ centers sampled by the algorithm, then the heuristic we propose simply plots $i$ versus $\Phi(C_i, X)$.
Here is the pseudocode that we use to estimate the dimension of the given dataset $X$ when the cost function is sum of squared distances.\footnote{When the cost function is sum of distance as in $k$-median, the last line should return $(-1.0)/m$.} Let $\ell$ be an input parameter that gives better estimates when it is large but at the cost of the running time. Since the datasets on which we perform our experiments have small intrinsic dimension, setting $\ell = 100$ suffices.

\begin{framed}
{\tt DimensionEstimate($X, \ell$)}\\
\hspace*{0.1in} - $C_0 \leftarrow \{\}$\\
\hspace*{0.1in} - for $i$ = $1$:$\ell$\\
\hspace*{0.3in} - $D^2$-sample a center $c$ with respect to $C_{i-1}$ \\
\hspace*{0.3in} - $C_i \leftarrow C_{i-1} \cup \{c\}$\\
\hspace*{0.3in} - $Cost[i] \leftarrow \Phi(C_i, X)$\\
\hspace*{0.1in} - Let  $m$ be the slope of the best fit line for the curve \\
\hspace*{0.2in} $Cost[i]$ versus $i$ on a log-log plot\\
\hspace*{0.1in} - return($(-2.0)/m$)
\end{framed}

We now evaluate our heuristic by testing it on synthetic and real datasets.
We first perform experiment on a synthetic dataset with extrinsic dimension $100$ where the points are within a small $d$-dimensional affine subspace with $d$ ranging from $2$ to $7$. We call these six datasets {\tt Affine-2} through {\tt Affine-7}.
More specifically, here is how this dataset is generated:
First, we pick a random point $\mathbf{x} \in \mathbb{R}^{100}$.
We then randomly generate $100000$ points by adding gaussian noise (using $N(0, 1)$) independently to the first $d$ coordinates of $\mathsf{x}$.
Figure \ref{fig:1} shows the plot of $Cost[i]$ versus $i$ and the log-log plot for the same.
We note that the log-log plot are nearly straight lines and $(-2.0)/slope$ gives a close estimate for the intrinsic dimension.
For estimating the dimension, we repeat $10$ times and then report the average value.
The dimension estimates for {\tt Affine-2} to {\tt Affine-7} are $\{2.17, 3.27, 3.87, 4.65, 5.33, 6.26\}$.

\begin{figure}
    \centering
    \subfloat[Affine]{{\includegraphics[width=3.5cm]{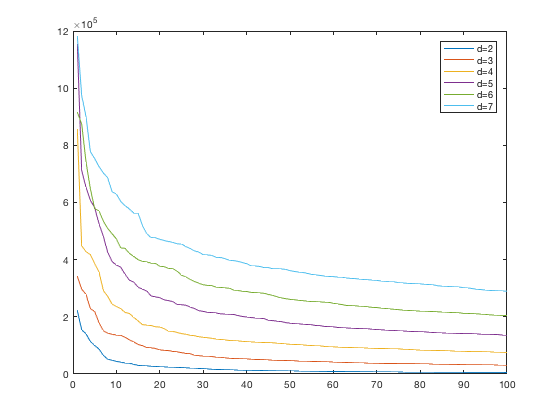} }}%
    \subfloat[Affine]{{\includegraphics[width=3.5cm]{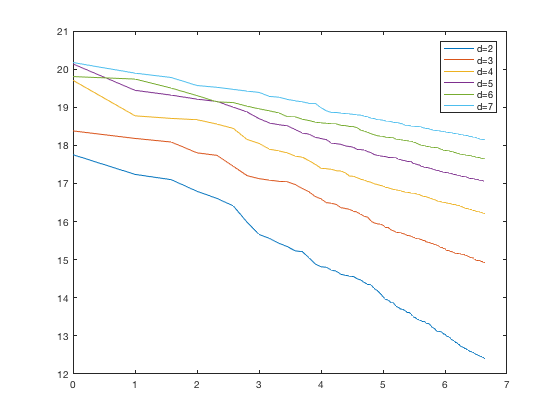} }}\\%
    \subfloat[Swissroll]{{\includegraphics[width=3.5cm]{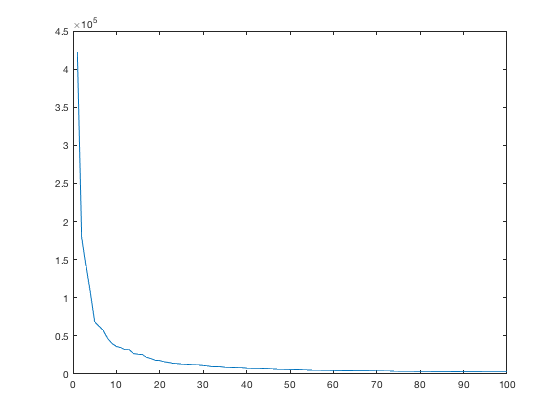} }}%
    \subfloat[Swissroll]{{\includegraphics[width=3.5cm]{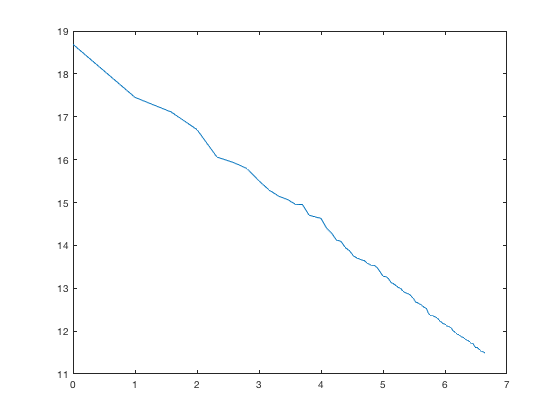} }}%
    \caption{Figures (a) and (c) show plot of $Cost[i]$ vs $i$ and (b) and (d) show the corresponding log-log plot for the Affine and Swissroll dataset.}%
    \label{fig:1}
\end{figure}

We also experiment with some of the widely used datasets for dimension estimation.
The first of these datasets is the "swissroll" dataset \footnote{\url{http://people.cs.uchicago.edu/~dinoj/manifold/swissroll.html}} comprising of $1600$ points in 3-dimensions located on a manifold shaped like a {\em swissroll}.
So, the intrinsic dimension should be $2$.
Figure \ref{fig:1} gives the plots for this dataset.
The estimate of the dimension given by our technique is $1.8$.

Another dataset used is a dataset generated from a video of a rotating teapot.
One frame of this video is considered a data point and there are 100 frames.
Since there is only one degree of freedom of motion for the teapot, ideally the dimension of this dataset should be $1$.
However, the dimension estimate using our technique is $6.9$ suggesting that lighting, reflection and other parameters may also be important parameters that are not ignored.
Another interesting property of the log-log plot for the Teapots dataset that one should note is that the curve is not a straight line as is the case with the previous examples. This suggests that the dimension may depend on the scale at which the data is analysed.
At a much more finer scale, the data may have much larger dimension than at a coarser level.
The interesting property of our technique is that using the slope at various points in the curve may give dimension at various scales.
What is interesting about this observation is that for complex datasets where we have no intuition regarding the intrinsic dimension, the shape of the log-log plot may provide useful insight about the dimensionality of the data at various scales.
Another dataset that is similar to the teapots dataset is the hand dataset where the data points are $481$ frames of a video capturing a rotating hand holding a cup.
The dimension estimate using our technique is $2.6$.
This matches estimates using other techniques.
In particular, it matches the estimates of \cite{kegl}.

 \begin{figure}[htbp]
    \centering
    \subfloat[Teapots]{{\includegraphics[width=3.5cm]{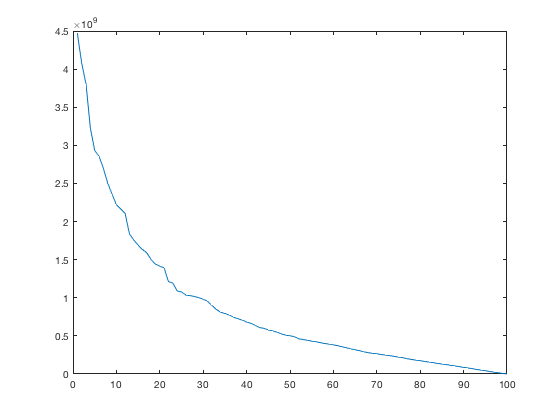}}}%
    \subfloat[Teapots]{{\includegraphics[width=3.5cm]{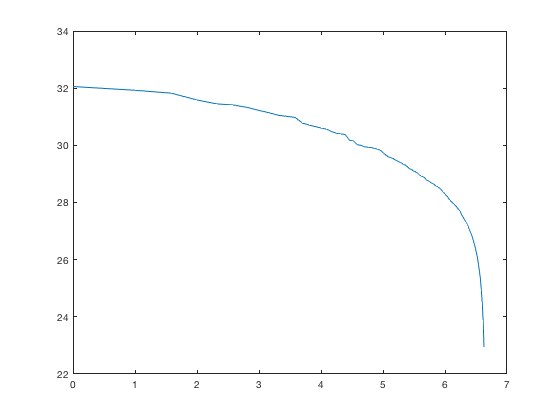} }}\\%
    \subfloat[Hand]{{\includegraphics[width=3.5cm]{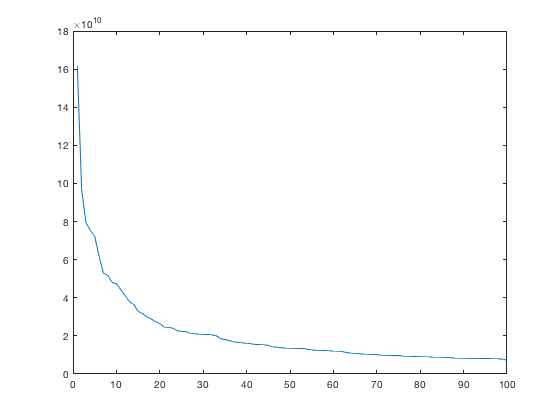}}}%
    \subfloat[Hand]{{\includegraphics[width=3.5cm]{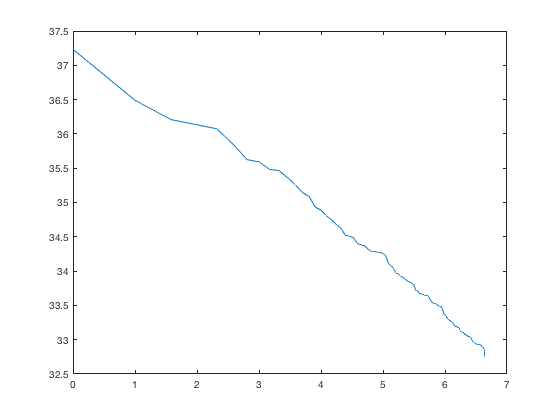} }}%
    \caption{Figures (a) and (c) show plot of $Cost[i]$ vs $i$ and (b) and (d) show the corresponding log-log plot for the Teapots and Hand dataset.}%
    \label{fig:2}%
\end{figure}

We perform experiments on a dataset where it is not clear what the intrinsic dimension might be.
We test our technique on the MNIST test dataset for hand written number "2".
The number of data points is around $1000$ and the extrinsic dimension is $784$ ($28 \times 28$ pixels).
Our estimate for the intrinsic dimension is $11.4$.
\footnote{However, one should mention that the estimate has a high variance with the standard deviation being $2.6$ over 30 runs.}
This dimension estimate closely matches the estimate of a number of previous works. The estimate in many previous works are in the range $12$-$14$ (see e.g., \cite{gc16}).

 \begin{figure}[htbp!]
    \centering
    \subfloat[\url{http://vasc.ri.cmu.edu/idb/html/motion/hand/index.html}]{{\includegraphics[scale=0.2]{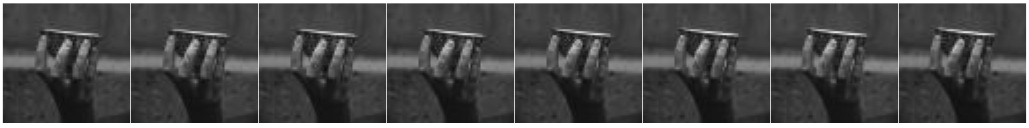}}}\\%
    \subfloat[\url{http://cseweb.ucsd.edu/~saul/papers/sde_cvpr04.pdf}]{{\includegraphics[scale=0.45]{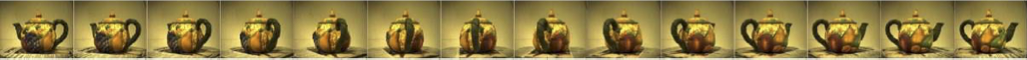} }}\\%
    \subfloat[\url{http://yann.lecun.com/exdb/mnist/}]{{\includegraphics[scale=0.3]{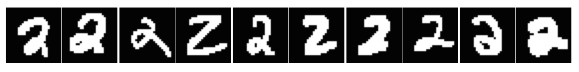} }}\\%
    \caption{Hand, Teapot, and MNIST(2) datasets}%
    \label{fig:3}%
\end{figure}

\bibliographystyle{plain}
\bibliography{ref}

\appendix
\section{Bounds on $\N(\Sp, \veps)$ and $\Pa(\Sp, \veps)$} \label{sec:appen}

We restate Lemma~\ref{lemma:geometric-bounds} below before giving the proof.

\begin{lemma}[Restatement of Lemma~\ref{lemma:geometric-bounds}]
Let $\Sp$ denote a unit sphere in $\R^d$ and $0 < \veps < 1$. 
Then
\begin{enumerate}
\item $\N(\Sp, \veps) = O\left(\frac{1}{(\veps/8)^{d-1}}\right)$, and
\item $\Pa(\Sp, \veps) = \Omega\left(\frac{1}{(2\veps)^{d-1}}\right)$.
\end{enumerate}
\end{lemma}

\begin{proof}
We will use the following well known fact to obtain our bounds:
\begin{equation}\label{fact:cover-pack}
    \Pa(\Sp, 2\veps) \leq \N(\Sp, \veps) \leq \Pa(\Sp, \veps)
\end{equation}

Let $p$ be any point in $\Sp$ and let $S(\gamma)$ be the spherical cap over the unit sphere formed by taking the intersection of a ball of radius $\gamma$ at $p$ with the surface of the unit sphere.
We will now upper bound the surface area of the spherical cap $S(\gamma)$.
Let $A_d(r)$ denote the surface area of a sphere of radius $r$ in a $d$-dimensional Euclidean space. 
The bound on the surface area can be calculated using the following integral (see Figure~\ref{fig:bounds} for reference):
\begin{figure}
\centering
\includegraphics[scale=0.18]{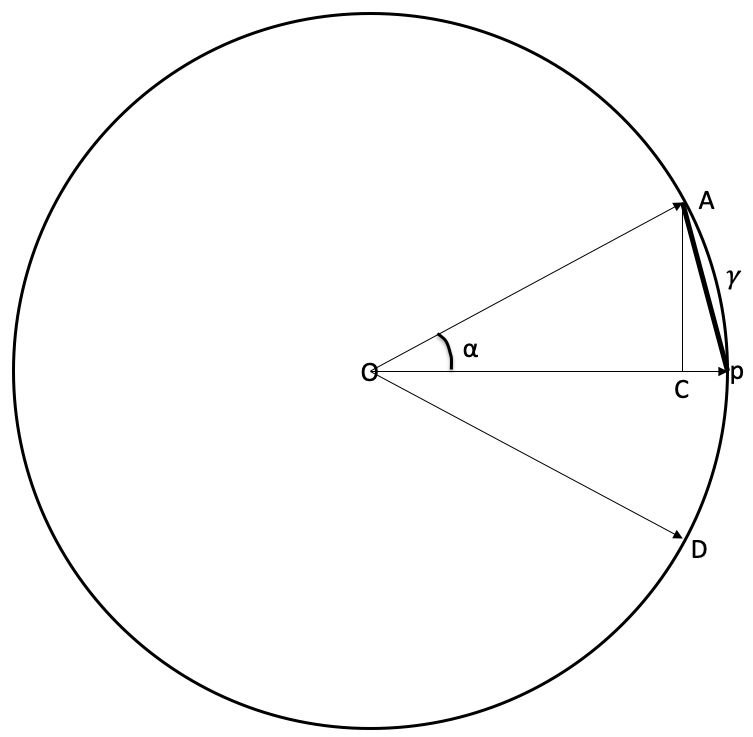}
\caption{We consider sphere of radius $\gamma$ centered on $p$. $\alpha$ is the angle of the spherical cap at the origin.}
\label{fig:bounds}
\end{figure}
\begin{eqnarray*}
&~~&S(\gamma)\\
&=& \int_{0}^{\alpha} A_{d-1}(\sin{\theta}) d \theta \\
&=& A_{d-1}(1) \cdot \int_{0}^{\alpha} (\sin{\theta})^{d-2} d\theta \\
&\leq& A_{d-1}(1) \cdot \int_{0}^{\alpha} \theta^{d-2} d \theta = A_{d-1}(1) \cdot \frac{\alpha^{d-1}}{d-1}
\end{eqnarray*}
Similarly, we can also obtain an upper bound on $S(\gamma)$:
\begin{eqnarray*}
S(\gamma) &=& \int_{0}^{\alpha} A_{d-1}(\sin{\theta}) d \theta \\
&=& A_{d-1}(1) \cdot \int_{0}^{\alpha} (\sin{\theta})^{d-2} d\theta \\
&\geq& A_{d-1}(1) \cdot \int_{0}^{\alpha} \frac{\theta^{d-2}}{2^{d-2}} d \theta  \\
&=& A_{d-1}(1) \cdot \frac{(\alpha/2)^{d-1}}{d-1}
\end{eqnarray*}
Using the fact that $\sin(\alpha) = \gamma\sqrt{1 - \gamma^2/4}$, we obtain the following bounds on $S(\veps)$ and $S(\veps/2)$ that we will use.
\[
S(\veps) \leq A_{d-1}(1)\cdot \frac{(2\veps)^{d-1}}{d-1} \ \ \textrm{and}\ \ S\left(\frac{\veps}{2}\right) \geq A_{d-1}(1) \cdot \frac{(\frac{\veps}{8})^{d-1}}{d-1}
\]
Let $N$ be any $\veps$-covering set over the unit sphere with minimal cardinality. Then we have $A_{d}(1) \leq |N| \cdot S(\veps)$ which gives:
\begin{eqnarray*}
|N| \geq \frac{A_d(1)}{A_{d-1}(1)} \cdot \frac{d-1}{(2\veps)^{d-1}} = \frac{2 \pi (d-1)}{d} \cdot \frac{1}{(2\veps)^{d-1}}.
\end{eqnarray*}
Using eqn.(\ref{fact:cover-pack}), we get that $\Pa(\Sp, \veps) = \Omega\left(\frac{1}{(2\veps)^{d-1}}\right)$.

Let $P$ be an $\veps$-packing set of maximum cardinality over the unit sphere. 
Note that balls of radius $(\veps/2)$ centered at points in $P$ do not intersect. 
This gives $|P| \cdot S(\veps/2) \leq A_{d}(1)$. 
So, using lower bound on the surface area of the spherical cap, we obtain:
\begin{eqnarray*}
|P| \leq \frac{A_d(1)}{A_{d-1}(1)} \cdot \frac{d-1}{(\veps/8)^{d-1}} = \frac{2 \pi (d-1)}{d} \cdot \frac{1}{(\veps/8)^{d-1}}.
\end{eqnarray*}
Using eqn.(\ref{fact:cover-pack}), we get that $\N(\Sp, \veps) = O\left(\frac{1}{(\veps/8)^{d-1}}\right)$.
\end{proof}

\section{$(\lowercase{k},\veps)$-movement-based Coreset implies a $(\lowercase{k},\veps)$-coreset} \label{sec:appen2}

\begin{theorem}\label{thm:coreset-1}
Let $X \subseteq \R^d$ be any dataset and $S$ be a $(k, \veps^2/32)$-movement-based coreset of $X$. Let $w: S \rightarrow \R^{+}$ be a weight function defined as follows $\forall s \in S, w(s) = |\N_X(s)|$, where for any point $s \in S$, we define $\N_X(s)$ to be the set of points from $X$ such that their closest point $S$ is $s$. Then, $S$ along with weight function $w$ is a $(k,\veps)$-coreset of $X$. \end{theorem}

\begin{proof}
Let $C$ be any set of $k$ centers. For any point $x \in X$, let $c_x$ denote its closest point in the set $C$. Similarly, let $s_x$ denote its closest point in the set $S$. Given this, first we note that for any point $x \in X$, if $D(x, c_x) > D(s_x, c_{s_x})$ we have:
\begin{eqnarray*}
|D(x, c_x) - D(s_x, c_{s_x})|
&\leq& D(x, c_{s_x}) - D(s_x, c_{s_x}) \\
&\leq& D(x, s_x),
\end{eqnarray*}
and when $D(x, c_x) \leq D(s_x, c_{s_x})$, we have:
\begin{eqnarray*}
|D(x, c_x) - D(s_x, c_{s_x})| &\leq& D(s_x, c_x) - D(x, c_x)\\
&\leq& D(x, s_x)
\end{eqnarray*}
So, from the above two inequalities, we get $|D(x, c_x) - D(s_x, c_{s_x})| \leq D(x, s_x)$.
Now, for every point $x \in X$, $|D(x, c_x)^2 - D(s_x, c_{s_x})^2|$ equals:
\begin{eqnarray*}
&&|D(x, c_x)^2 - D(s_x, c_{s_x})^2|\\
&=& |D(x, c_x) - D(s_x, c_{s_x})| \cdot (D(x, c_x) + D(s_x, c_{s_x})) \\
&\leq& D(x, s_x) \cdot (D(x, c_x) + D(s_x, c_{x}))  \\
&& \textrm{(since $D(s_x, c_{s_x}) \leq D(s_x, c_x)$ and} \\
&& \textrm{$|D(x, c_x) - D(s_x, c_{s_x})| \leq D(x, s_x)$)} \\
&\leq& D(x, s_x) \cdot (2 \cdot D(x, c_x) + D(x, s_x))\\
&& \textrm{(since $D(s_x, c_x) \leq D(x, s_x) + D(x, c_x)$)}\\
&=& 2 \cdot D(x, s_x) \cdot D(x, c_x) + D(x, s_x)^2
\end{eqnarray*}
We will use the above inequality to now show the main result. 
$|\Phi(C, S, w) - \Phi(C, X)|$ can be upper bounded by:
\begin{eqnarray*}
&& \sum_{x \in X} |D(x, c_x)^2 - D(s_x, c_{s_x})^2|\\
&\leq& \sum_{x \in X} 2 \cdot D(x, s_x) \cdot D(x, c_x) + \sum_{x \in X} D(x, s_x)^2 \\
&=& 2 \cdot \sum_{x \in X, \frac{4}{\veps} \cdot D(x, s_x) <  D(x, c_x)} D(x, s_x) \cdot D(x, c_x) + \\
&& 2 \cdot \sum_{x \in X, \frac{4}{\veps} \cdot D(x, s_x) \geq  D(x, c_x)} D(x, s_x) \cdot D(x, c_x)+\\
&& \sum_{x \in X} D(x, s_x)^2\\
&\leq & 2 \cdot \sum_{x \in X, \frac{4}{\veps} \cdot D(x, s_x) <  D(x, c_x)} (\veps/4) \cdot D(x, c_x)^2 + \\
&& 2 \cdot \sum_{x \in X, \frac{4}{\veps} \cdot D(x, s_x) \geq  D(x, c_x)} \frac{4}{\veps} \cdot D(x, s_x)^2+\\
&& \sum_{x \in X} D(x, s_x)^2\\
&\leq & \frac{\veps}{2} \cdot \sum_{x \in X}  D(x, c_x)^2 + \frac{8}{\veps} \cdot \sum_{x \in X}  D(x, s_x)^2+\\
&& \sum_{x \in X} D(x, s_x)^2\\
&=& \frac{\veps}{2} \cdot \Phi(C, X) + (8/\veps + 1) \cdot \Phi(S, X)\\
&\leq& \frac{\veps}{2} \cdot \Phi(C, X) + \frac{\veps}{2} \cdot \Delta_X(k)\\
&\leq& \veps \cdot \Phi(C, X)
\end{eqnarray*}
This implies that $|\Phi(C, S, w) - \Phi(C, X)| \leq \veps \cdot \Phi(C, X)$, which in turn means that $S$ is a $(k, \veps)$-coreset of $X$. This completes the proof of the theorem.
\end{proof}

\end{document}